\newtheorem{theorem}{Theorem}
\newtheorem{lemma}[theorem]{Lemma}
\newtheorem{proposition}[theorem]{Proposition}
\newtheorem{corollary}[theorem]{Corollary}
\newtheorem{example}[theorem]{Example}
\newcommand{\ord}{{\mathrm{ord}}}
\newcommand{\lcm}{{\mathrm{lcm}}}
\newcommand{\tr}{{\mathrm{Tr}}}
\newcommand{\gf}{{\mathrm{GF}}}
\newcommand{\PG}{{\mathrm{PG}}}
\newcommand{\PAut}{{\mathrm{PAut}}}
\newcommand{\MAut}{{\mathrm{MAut}}}
\newcommand{\GAut}{{\mathrm{Aut}}}
\newcommand{\Sym}{{\mathrm{Sym}}}
\newcommand{\Z}{\mathbb{{Z}}}
\newcommand{\m}{\mathbb{M}}
\newcommand{\cP}{{\mathcal{P}}}
\newcommand{\cB}{{\mathcal{B}}}
\newcommand{\C}{{\mathcal{C}}}
\newcommand{\bc}{{\mathbf{c}}}
\newcommand{\bzero}{{\mathbf{0}}}
\newcommand{\bD}{{\mathbb{D}}}
\newcommand{\PSL}{{\mathrm{PSL}}}
\newcommand{\PGL}{{\mathrm{PGL}}}
\begin{document}
%
\title{The subfield codes and subfield subcodes of a family of MDS codes\thanks{
C. Tang was supported by the National Natural Science Foundation of China (Grant No.
11871058) and China West Normal University (14E013, CXTD2014-4 and the Meritocracy Research
Funds). Q. Wang was supported by the National Natural Science Foundation of China (Grant No.
11931005). C. Ding's research was supported by the Hong Kong Research Grants Council,
Proj. No. 16300919.}
}

\author{Chunming Tang, \and Qi Wang, \and Cunsheng Ding 

\thanks{C. Tang is with the School of Mathematics  and Information,
China West Normal University, Nanchong 637002,  China (email:
tangchunmingmath@163.com).
}

\thanks{Q. Wang is with the Department of Computer Science and Engineering, and is also with National Center for Applied Mathematics (Shenzhen), Southern University of Science and Technology, Shenzhen 518055, China (email: wangqi@sustech.edu.cn).} 

\thanks{C. Ding is with the Department of Computer Science and Engineering,
The Hong Kong University of Science and Technology, Clear Water Bay,
Kowloon, Hong Kong, China (email: cding@ust.hk).} 

}

\maketitle

\begin{abstract}
Maximum distance separable (MDS) codes are very important in both theory and practice. There is a classical construction of  a family of $[2^m+1, 2u-1, 2^m-2u+3]$ MDS codes for $1 \leq u \leq 2^{m-1}$, which are cyclic, reversible and BCH codes over $\gf(2^m)$. The objective of this paper is to study the quaternary subfield subcodes and quaternary subfield codes of a subfamily of the MDS codes for even $m$.  A family of quaternary cyclic codes is obtained. These quaternary codes are
distance-optimal in some cases and very good in general. Furthermore, infinite families of $3$-designs from these quaternary codes are presented.     
\end{abstract}

\begin{IEEEkeywords}
BCH code, \and cyclic code,  \and MDS code, \and subfield code, \and subfield subcode, \and $t$-design.      
\end{IEEEkeywords}

%
\IEEEpeerreviewmaketitle

\section{Introduction} 

An $[n, k, d]$ code over $\gf(q)$ is a $k$-dimensional linear subspace of $\gf(q)^n$ 
with minimum Hamming distance $d$.  By the parameters of a linear code, we refer to its length $n$, 
dimension $k$ and minimum distance $d$. 
An $[n, k, d]$ 
code over $\gf(q)$ is called \textit{distance-optimal} (resp. 
\textit{dimension-optimal} and \textit{length-optimal}) if there is no $[n, k, d' \geq d+1]$ (resp. 
$[n, k' \geq k+1, d]$ and $[n' \leq n-1, k, d]$) linear code over $\gf(q)$. An optimal code 
is a code that is length-optimal, or dimension-optimal, or distance-optimal, or meets a bound for linear codes. 

Let $\C$ be an $[n, k, d]$ linear code over $\gf(q)$. Let $A_i$ denote the
number of codewords with Hamming weight $i$ in $\C$ for $0 \leq i \leq n$. The sequence
$(A_0, A_1, \cdots, A_{n})$ of integers is
called the \textit{weight distribution} of $\C$, and the polynomial $\sum_{i=0}^n A_iz^i$ is referred to as
the \textit{weight enumerator} of $\C$. In this paper, $\C^\perp$ denotes the dual code 
of a linear code $\C$,  and $(A_0^\perp, A_1^\perp, \cdots, A_{n}^\perp)$ 
denotes the weight distribution of $\C^\perp$.  

An $[n, k, n-k+1]$ linear code is called a maximum distance separable (MDS) code. 
MDS codes are very important in theory and have important applications. For example, the Reed-Solomon codes 
are widely used in communication systems and data storage devices.  
There is a classical construction of  a family of 
$[2^m+1, 2u-1, 2^m-2u+3]$ MDS codes for $1 \leq u \leq 2^{m-1}$, which are cyclic, reversible and BCH codes 
over $\gf(2^m)$. 
The objective of this paper is to study the quaternary subfield subcodes and quaternary subfield codes 
of a subfamily of the MDS codes for even $m$.  A family of quaternary cyclic codes is obtained. These quaternary codes 
are distance-optimal in some cases and very good in general. Infinite families of 
$3$-designs from these quaternary codes is presented.

\section{Cyclic codes and BCH codes over finite fields} 

Let $q$ be a prime power. An $[n,k, d]$ code $\C$ over $\gf(q)$ is said to be {\em cyclic} if 
$(c_0,c_1, \cdots, c_{n-1}) \in \C$ implies $(c_{n-1}, c_0, c_1, \cdots, c_{n-2}) 
\in \C$.  
We identify any vector $(c_0,c_1, \cdots, c_{n-1}) \in \gf(q)^n$ 
with  
$$ 
c_0+c_1x+c_2x^2+ \cdots + c_{n-1}x^{n-1} \in \gf(q)[x]/(x^n-1).  
$$
Then a code $\C$ of length $n$ over $\gf(q)$ corresponds to a subset of the ring 
$\gf(q)[x]/(x^n-1)$. 
A linear code $\C$ is cyclic if and only if its corresponding subset in $\gf(q)[x]/(x^n-1)$ 
is an ideal of the ring $\gf(q)[x]/(x^n-1)$. 
A cyclic code $\C$ is said to be {\em reversible} if $\C \cap \C^\perp = \{\bzero\}$. Such a code is called a {\em linear complementary dual} (LCD) code. Note that the family of MDS codes presented in the next section are cyclic BCH codes and reversible. 

It is well known that every ideal of $\gf(q)[x]/(x^n-1)$ is principal. Let $\C=\langle g(x) \rangle$ be a 
cyclic code, where $g(x)$ is monic and has the smallest degree among all the 
generators of $\C$. Then $g(x)$ is unique and called the {\em generator polynomial,} 
and $h(x)=(x^n-1)/g(x)$ is referred to as the {\em check polynomial} of $\C$.

Let $n$ be a positive integer and 
let $\Z_n$ denote  the set $\{0,1,2, \cdots, n-1\}$.  Assume that $\gcd(n, q)=1$. For any integer $s$ with $0 \leq s <n$, the \emph{$q$-cyclotomic coset of $s$ modulo $n$\index{$q$-cyclotomic coset modulo $n$}} is defined by 
$$ 
C_s=\{s, sq, sq^2, \cdots, sq^{\ell_s-1}\} \bmod n \subseteq \Z_n,  
$$
where $\ell_s$ is the smallest positive integer such that $s \equiv s q^{\ell_s} \pmod{n}$, and is the size of the 
$q$-cyclotomic coset. The smallest integer in $C_s$ is called the \emph{coset leader\index{coset leader}} of $C_s$. 
Let $\Gamma_{(n,q)}$ be the set of all the coset leaders. It is easily seen then $C_s \cap C_t = \emptyset$ for any two 
distinct elements $s$ and $t$ in  $\Gamma_{(n,q)}$, and  
\begin{eqnarray}\label{eqn-cosetPP}
\bigcup_{s \in  \Gamma_{(n,q)} } C_s = \Z_n. 
\end{eqnarray}
Hence, the distinct $q$-cyclotomic cosets modulo $n$ partition $\Z_n$. 

Let $m=\ord_{n}(q)$ be the order of $q$ modulo $n$, and let $\alpha$ be a generator of $\gf(q^m)^*$. Put $\beta=\alpha^{(q^m-1)/n}$. 
Then $\beta$ is a primitive $n$-th root of unity in $\gf(q^m)$. The minimal polynomial $\m_{\beta^s}(x)$ 
of $\beta^s$ over $\gf(q)$ is the monic polynomial of the smallest degree over $\gf(q)$ with $\beta^s$ 
as a root.  It is obvious that this polynomial is given by 
\begin{eqnarray*}
\m_{\beta^s}(x)=\prod_{i \in C_s} (x-\beta^i) \in \gf(q)[x], 
\end{eqnarray*} 
which is irreducible over $\gf(q)$. It then follows from (\ref{eqn-cosetPP}) that 
\begin{eqnarray*}
x^n-1=\prod_{s \in  \Gamma_{(n,q)}} \m_{\beta^s}(x) ,
\end{eqnarray*}
which is the factorization of $x^n-1$ into irreducible factors over $\gf(q)$.

Let $\delta$ be an integer with $2 \leq \delta \leq n$ and let $h$ be an integer.  
A \emph{BCH code} over $\gf(q)$ 
with length $n$ and \emph{designed distance} $\delta$, denoted by $\C_{(q,n,\delta,h)}$, is a cyclic code with 
the generator polynomial 
\begin{eqnarray}\label{eqn-BCHdefiningSet}
g_{(q,n,\delta,h)}=\lcm(\m_{\beta^h}(x), \m_{\beta^{h+1}}(x), \cdots, \m_{\beta^{h+\delta-2}}(x)) 
\end{eqnarray}
where the least common multiple is computed over $\gf(q)$. 

If $h=1$, the code $\C_{(q,n,\delta,h)}$ with the generator polynomial in (\ref{eqn-BCHdefiningSet}) is called a \emph{narrow-sense} BCH code. If $n=q^m-1$, then $\C_{(q,n,\delta,h)}$ is referred to as a \emph{primitive} BCH code. 

BCH codes are a subclass of cyclic codes with interesting properties and applications. In many cases BCH codes are 
the best linear codes. For example, among all binary cyclic codes of odd length $n$ with $n \leq 125$ the best cyclic 
code is always a BCH code except for two special cases~\cite{Dingbook15}. Reed-Solomon codes are also BCH codes 
and are widely used in communication devices and consumer electronics.  
In the past decade, a lot of progress on the study of BCH codes has been made (see, for example, \cite{LWL19,LiSIAM,LLFLR,SYW,YLLY}). A 70-year breakthrough and a 71-year breakthrough in 
combinatorics were recently made with the help of special families of BCH codes \cite{DingTang19,TangDing20}.   

In this paper, we will use a family of BCH codes of length $2^m+1$ over $\gf(2^m)$ to obtain a family of quaternary 
cyclic codes with interesting parameters and application in combinatorics.

\section{A family of $[2^m+1, 2u-1, 2^m-2u+3]$ MDS codes over $\gf(q)$}\label{sec-LCDMDScycliccode} 

In the following, let $q=2^m$ and $m$ be a positive integer. Let $\alpha$ be a primitive element of $\gf(q^2)$. 
Define $\beta=\alpha^{q-1}$. Then $\beta$ is a $(q+1)$-th primitive root of unity in $\gf(q^2)$. Let $\m_{\beta^i}(x)$ denote 
the minimal polynomial of $\beta^i$ over $\gf(q)$. Clearly, $\m_{\beta^0}(x)=x-1$ and 
$$ 
\m_{\beta^i}(x)=(x-\beta^i)(x-\beta^{q+1-i})=(x-\beta^i)(x-\beta^{-i}) 
$$  
for all $1 \leq i \leq q$. 

For each $u$ with $1 \leq u \leq q/2$, define 
\begin{eqnarray*}
g_u(x)=\m_{\beta^u}(x) \cdots \m_{\beta^{q/2}}(x).  
\end{eqnarray*}
Let $\C_u$ be the cyclic code of length $q+1$ over $\gf(q)$ with the generator polynomial $g_u(x)$. 

The following theorem is well known \cite{MS77}. But for completeness, we give a proof here, which is very short. 

\begin{theorem}\label{thm-sdjoin1}
Let $m \geq 2$ be an integer. For each $u$ with $1 \leq u \leq q/2$, $\C_u$ is a $[q+1, 2u-1, q-2u+3]$ MDS cyclic code 
and is reversible. 
\end{theorem}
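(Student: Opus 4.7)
My plan is to establish the four claims (length, dimension, minimum distance, reversibility) in turn, with the bulk of the argument being essentially a defining-set computation plus the BCH bound and Singleton bound.

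First I would nail down the defining set $T_u$ of $\C_u$. Since $q+1$ is odd and $\beta$ has order $q+1$, for each $1 \le i \le q/2$ the elements $\beta^i$ and $\beta^{-i}=\beta^{q+1-i}$ are distinct conjugates over $\gf(q)$, so $C_i=\{i,\,q+1-i\}$ and $\m_{\beta^i}(x)$ has degree $2$. Hence
\[
T_u \;=\; \bigcup_{i=u}^{q/2} C_i \;=\; \{u,u+1,\dots,q/2,\; q/2+1,\dots,q+1-u\},
\]
a block of $q-2u+2$ consecutive integers modulo $q+1$. This immediately gives $\deg g_u = q-2u+2$, so $\dim \C_u = (q+1)-(q-2u+2)=2u-1$.

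Next I would pin down the minimum distance from both sides. For the lower bound, since $T_u$ contains the $q-2u+2$ consecutive integers $u,u+1,\dots,q+1-u$, the BCH bound yields $d(\C_u)\ge q-2u+3$. For the upper bound, the Singleton bound gives $d(\C_u)\le (q+1)-(2u-1)+1 = q-2u+3$. The two bounds coincide, which simultaneously proves $d(\C_u)=q-2u+3$ and certifies that $\C_u$ is MDS.

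Finally, for reversibility (equivalently, the LCD property as defined in the paper), I would invoke the standard characterization that a cyclic code over $\gf(q)$ of length $n$ is reversible iff its defining set $T$ satisfies $-T\equiv T\pmod n$. Here $-T_u \equiv \{-(q+1-u),\dots,-u\} \equiv \{u,u+1,\dots,q+1-u\} \pmod{q+1} = T_u$, so $\C_u$ is reversible. There is no real obstacle in any step; the only mildly delicate point is that one must check that $q+1$ is odd (so $i\not\equiv -i$) and that $u\le q/2$ to ensure $T_u$ is a genuine block of consecutive residues symmetric about $(q+1)/2$. If one preferred a more self-contained argument for reversibility rather than quoting the $-T=T$ criterion, one could instead observe that the reciprocal polynomial of $g_u(x)$ equals $g_u(x)$ up to a unit (because the roots $\{\beta^{\pm u},\dots,\beta^{\pm q/2}\}$ are closed under inversion), which forces $\C_u=\C_u^{\mathrm{rev}}$.
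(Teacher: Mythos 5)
Your proposal is correct and follows essentially the same route as the paper: compute the dimension from the degree of the generator (equivalently, check) polynomial using $\deg \m_{\beta^i}=2$, combine the BCH bound on the consecutive roots $\beta^u,\dots,\beta^{q+1-u}$ with the Singleton bound to pin down $d=q-2u+3$ and the MDS property, and deduce reversibility from the closure of the root set under inversion. The only cosmetic difference is that you phrase the argument in terms of the defining set $T_u$ rather than the roots of $g_u(x)$ directly.
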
 

\begin{proof}
By definition, the check polynomial of $\C_u$ is 
$$ 
h_u(x)=\m_{\beta^0}(x) \cdots \m_{\beta^{u-1}}(x). 
$$
For each $i$ with $1 \leq i \leq q/2$, $\deg(\m_{\beta^i}(x))=2$. Consequently, $\deg(h_u(x))=2(u-1)+1=2u-1$. Hence, 
the dimension of $\C_u$, denoted by $\dim(\C_u)$, is $2u-1$. 

By definition, $g_u(x)$ has the following consecutive roots 
$$ 
\beta^u, \beta^{u+1}, \ldots, \beta^{q-u}, \beta^{q-u+1}. 
$$
Note that $(q-u+1)-u+1=q-2u+2$. It follows from the BCH bound for cyclic codes that the minimum distance 
$d(\C_u) \geq q-2u+3$. By the Singleton bound for linear codes, we have $d(\C_u) \leq q-2u+3$. As a result, 
$d(\C_u) = q-2u+3$. Since both $\beta^i$ and $\beta^{-i}$ are roots of $g_u(x)$ for all $i$ with $u \leq i \leq q/2$, 
$\C_u$ is reversible, i.e., LCD. Clearly, $\C_u$ is a BCH code. 
\end{proof}

Reed-Solomon codes over $\gf(q)$ are MDS codes with length $q-1$. Hence, the code $\C_u$ in Theorem 
\ref{thm-sdjoin1} is not a Reed-Solomon code. In the next section, we will study the  
subfield subcodes of these MDS codes $\C_{(q+4)/4}$.

\section{The subfield subcode $(\C_{(q+4)/4}^\perp)|_{\gf(4)}$ and its dual}\label{sec-subfieldsubcode}

The subfield subcode $\C|_{\gf(r)}$ of an $[n, \ell]$ code over $\gf(r^\ell)$ is the set of codewords 
in $\C$ each of whose components is in $\gf(r)$, where $\ell$ is a positive integer and $r$ is a 
power of a prime. Our task in this section is to study the subfield subcode $(\C_{(q+4)/4}^\perp)|_{\gf(4)}$ and its dual. 
Throughout this section, let $q=2^{2h}$, where $h$ is a positive integer. 

Let notation be the same as before. Notice that the $q$-cyclotomic cosets modulo $n=q+1$ are 
$$ 
\{0\}, \{1, q\}, \ldots, \{i, q+1-i\}, \ldots, \{q/2, (q+2)/2\}. 
$$ 
For each $i$ with $0 \leq i \leq n-1$, we have 
$$ 
\m_{\beta^i}(x)=\m_{\beta^{-i}}(x), 
$$
where $\m_{\beta^i}(x)$ denotes the minimal polynomial of $\beta^i$ over $\gf(q)$. 

Recall that the check polynomial of $\C_{(q+4)/4}^\perp$ (when $u = (q+4)/4$) is 
\begin{eqnarray*}
g_{(q+4)/4}(x) 
&=& \m_{\beta^{(q+4)/4}}(x)  \cdots \m_{\beta^{q/2}}(x) \\
&=& \m_{\beta^{-(q+4)/4}}(x) \cdots \m_{\beta^{{-q/2}}}(x).  
\end{eqnarray*} 
By the Delsarte Theorem, the trace representation of $\C_{(q+4)/4}^\perp$ is 
\begin{eqnarray}\label{eqn-may251}
\C_{(q+4)/4}^\perp=
\left\{ 
\left( \sum_{i=(q+4)/4}^{q/2} \tr_{q^2/q} (a_i \beta^{ij} )       \right)_{j=0}^q: a_i \in \gf(q^2)
\right\} 
\end{eqnarray} 
and 
\begin{eqnarray*}
\C_{(q+4)/4} =
\left\{ 
\left( \sum_{i=0}^{q/4} \tr_{q^2/q} (a_i \beta^{ij} )       \right)_{j=0}^q: a_i \in \gf(q^2)
\right\}.  
\end{eqnarray*} 
It then follows from the Delsarte Theorem and (\ref{eqn-may251}) that 
\begin{eqnarray*}
\left( (\C_{(q+4)/4})^\perp|_{\gf(4)} \right)^\perp 
&=& \tr_{q/4} (\C_{(q+4)/4}) \\ 
&=& \left\{ 
\left( \sum_{i=0}^{q/4} \tr_{q^2/4} (a_i \beta^{ij} )       \right)_{j=0}^q: a_i \in \gf(q^2)
\right\} 
\end{eqnarray*} 
and 
\begin{eqnarray}\label{eqn-june71}
 (\C_{(q+4)/4})^\perp|_{\gf(4)}  
= \left\{ 
\left( \sum_{i=(q+4)/4}^{q/2} \tr_{q^2/4} (a_i \beta^{ij} )       \right)_{j=0}^q: a_i \in \gf(q^2)
\right\}.  
\end{eqnarray}  
It follows from the Delsarte Theorem again that the generator polynomial of the cyclic code 
$\C_{(q+4)/4}^\perp|_{\gf(4)}$ 
has zeros 
$$ 
\beta^{-0}, \beta^{-1}, \ldots, \beta^{-q/4}. 
$$ 
Consequently, the generator polynomial of $ \C_{(q+4)/4}^\perp|_{\gf(4)}$ 
has zeros 
$$ 
\beta^{0}, \beta^{1}, \ldots, \beta^{q/4}. 
$$ 

Let $C_i^{(4)}$ denote the $4$-cyclotomic coset modulo $n$ containing $i$, where $0 \leq i \leq n-1$. 
Put 
$$ 
T=\bigcup_{i=0}^{q/4} C_i^{(4)}.  
$$ 

The discussions above proved the following lemma. 

\begin{lemma}\label{lem-may271} 
Let notation be the same as before. Then 
\begin{itemize}
\item $\{0, 1, 2, \ldots, 4^{h-1}\} \subset T$ and 
\item $\{n-1, n-2, \ldots, n-4^{h-1}\} \subset T$. 
\end{itemize}
\end{lemma}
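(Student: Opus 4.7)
\medskip

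\noindent\textbf{Proof proposal.} First observe that $q/4 = 2^{2h}/4 = 4^{h-1}$, so $T=\bigcup_{i=0}^{4^{h-1}} C_i^{(4)}$. The first inclusion is immediate from the definition: for each integer $i$ with $0 \leq i \leq 4^{h-1}$, the cyclotomic coset $C_i^{(4)}$ contains $i$ itself, so $\{0,1,\ldots,4^{h-1}\} \subset T$.

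For the second inclusion, the plan is to exploit the specific arithmetic of $n=q+1=4^{h}+1$. From $n=4^{h}+1$ we get the congruence
\[
4^{h} \equiv -1 \pmod{n},
\]
which will do all the work. Indeed, for any integer $j$ the coset $C_j^{(4)}$ is by definition closed under multiplication by $4$ modulo $n$, so $j\cdot 4^{h} \bmod n \in C_j^{(4)}$; but the congruence above yields $j\cdot 4^{h} \equiv -j \equiv n-j \pmod{n}$. Hence $n-j \in C_j^{(4)}$ for every $j$ with $1 \leq j \leq 4^{h-1}$, and therefore $n-j \in C_j^{(4)} \subset T$. As $j$ ranges over $1,2,\ldots,4^{h-1}$, the element $n-j$ ranges over $n-1, n-2, \ldots, n-4^{h-1}$, giving exactly the second stated inclusion.

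There is essentially no obstacle here; the lemma is a direct structural consequence of $n=4^{h}+1$. The only thing to be slightly careful about is making sure that the relation $4^{h}\equiv -1 \pmod n$ is applied to a coset $C_j^{(4)}$ with $j$ already known to lie in $T$, which is guaranteed since $1 \leq j \leq 4^{h-1} = q/4$.
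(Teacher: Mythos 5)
Your proof is correct and rests on the same mechanism as the paper's (implicit) argument: since $n=4^h+1$, we have $4^h\equiv -1\pmod n$, so $n-j$ lies in the same $4$-cyclotomic coset as $j$, which is exactly why the paper can pass from the zeros $\beta^{-0},\ldots,\beta^{-q/4}$ to $\beta^{0},\ldots,\beta^{q/4}$. Nothing further is needed.
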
 

\begin{lemma}\label{lem-may272}  
Let notation be the same as before, and 
let $4^{h-1}+1 \leq i \leq 4^{h-1}+4^{h-2}+ \cdots + 4 +1$. Then $i \in T$. 
\end{lemma}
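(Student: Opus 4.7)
The plan is to show, for each $i$ in the range $[4^{h-1}+1,\,(4^h-1)/3]$, that some element of the $4$-cyclotomic coset $C_i^{(4)}$ lies in $\{0,1,\ldots,4^{h-1}\}$; by the first bullet of Lemma~\ref{lem-may271} this will immediately give $i\in T$. The key algebraic fact is $4^h \equiv -1 \pmod{n}$, which holds because $n=q+1=4^h+1$ and which controls multiplication by $4$ on values slightly larger than $4^{h-1}$.

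I would begin by writing $i = 4^{h-1}+j$ with $1 \leq j \leq 4^{h-2}+\cdots+4+1 = (4^{h-1}-1)/3$. Then
\[
4i \;=\; 4^h + 4j \;\equiv\; 4j-1 \pmod{n},
\]
and $0 < 4j-1 < n$, so $4j-1$ is the reduced representative and $C_i^{(4)}=C_{4j-1}^{(4)}$; it thus suffices to show $4j-1\in T$. I would then split on the size of $j$: if $j \leq 4^{h-2}$, then $4j-1 \leq 4^{h-1}-1$ lies in the initial set $\{0,1,\ldots,4^{h-1}\}$, and we are done. If instead $j \geq 4^{h-2}+1$, then a short estimate gives $4^{h-1}+3 \leq 4j-1 \leq (4^h-7)/3 < (4^h-1)/3$, so $4j-1$ is again of the form $4^{h-1}+j'$ with $j'$ in the same range $[1,(4^{h-1}-1)/3]$; moreover $j'-j = 3j-1-4^{h-1} < 0$ since $3j \leq 4^{h-1}-1$.

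Iterating $i\mapsto 4i\bmod n$ therefore either immediately reaches the initial set, or strictly decreases the positive integer $j$ while keeping us inside $[4^{h-1}+1,(4^h-1)/3]$; since $j$ cannot decrease forever, the base case must be reached after finitely many steps, proving $i \in T$. I do not foresee any real obstacle; the only point requiring care is verifying the two-sided inequalities for $4j-1$ and $j'$ that justify each step of the recursion, and these reduce to elementary arithmetic once the relation $4^h \equiv -1 \pmod{n}$ is in place.
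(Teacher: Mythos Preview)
Your argument is correct. Both your proof and the paper's rely on the identity $4^h\equiv -1\pmod n$ and aim to exhibit a power $4^t$ with $i\cdot 4^t \bmod n \le 4^{h-1}$, but the executions differ. The paper parametrizes from the top, writing $i=\delta-j$ with $\delta=(4^h-1)/3$, and then argues in one shot that there is a $t\in\{1,\dots,h-1\}$ with $4^{h-1}+1\le (2+3j)4^t\le 4^h+1$; for that $t$ it computes the closed form
\[
i\cdot 4^t \bmod n \;=\; \frac{4^h-(2+3j)4^t+1}{3}\;\le\;4^{h-1}.
\]
Your version parametrizes from the bottom, $i=4^{h-1}+j$, and replaces the single existence step by a descent: the map $i\mapsto 4i\bmod n=4j-1$ either lands in $\{0,\dots,4^{h-1}\}$ (when $j\le 4^{h-2}$) or returns a value $4^{h-1}+j'$ of the same shape with $j'=4j-1-4^{h-1}<j$. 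The monotone descent neatly sidesteps the need to justify the choice of $t$ in advance; the paper's approach, by contrast, tells you explicitly which power of $4$ does the job and gives a closed-form residue. Either way the inequalities are elementary, and your bounds $j'\ge 3$ and $j'\le (4^{h-1}-7)/3$ are exactly what is needed to keep the recursion inside the stated interval.
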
 

\begin{proof} 
The conclusion can be easily verified for $h \in \{1,2\}$. Assume now that $h \geq 3$.  
Let 
$$ 
\delta= 4^{h-1}+4^{h-2}+ \cdots + 4 +1 = \frac{4^h-1}{3}
$$ 
and 
$$ 
\gamma =4^{h-2}+4^{h-3}+\cdots + 4 = \frac{4^{h-1}-4}{3}. 
$$
For each integer $i$ with $4^{h-1}+1 \leq i \leq \delta$, there is an integer $j$ with $0 \leq j \leq \gamma$ 
such that $i=\delta - j$. 

Note that 
$$ 
2 \leq 2+3j \leq 2 + 3 \gamma = 4^{h-1}-2. 
$$ 
It then follows that there is an integer $t$ with $1 \leq t \leq h-1$ such that 
\begin{eqnarray*}
4^{h-1}+1 \leq (2+3j)4^t \leq 4^h+1. 
\end{eqnarray*}
With this $t$, we have 
\begin{eqnarray*}
i 4^t \bmod{n} 
&=& [4^{h-1}+4^{h-2}+ \cdots + 4^t - (4^{t-1}+4^{t-2}+\cdots+1) - 4^tj ] \bmod{n}  \\ 
&=& \frac{4^h-(2+3j)4^t+1}{3} \bmod{n} \\
&=& \frac{4^h-(2+3j)4^t+1}{3}  \\
& \leq & 4^{h-1}. 
\end{eqnarray*} 
This means that $i 4^t \bmod{n} \in T$. Consequently,  $i \in T$.
\end{proof}

\begin{lemma}\label{lem-may273}
Let notation be the same as before. Then 
\begin{itemize}
\item $\{0, 1, 2, \ldots, \delta\} \subset T$ and 
\item $\{n-1, n-2, \ldots, n-\delta\} \subset T$. 
\end{itemize}
\end{lemma}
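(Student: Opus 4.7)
The plan is to assemble Lemma \ref{lem-may273} directly from Lemmas \ref{lem-may271} and \ref{lem-may272}, together with a symmetry observation about $4$-cyclotomic cosets modulo $n = q+1 = 4^h+1$.

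First, I would record the telescoping identity
\[
4^{h-1} + (4^{h-1} + 4^{h-2} + \cdots + 4 + 1) \text{ covers } \{0,1,\ldots,\delta\},
\]
more precisely, noting that $\delta = (4^h-1)/3 = 4^{h-1}+4^{h-2}+\cdots+4+1$. Lemma \ref{lem-may271} gives $\{0,1,\ldots,4^{h-1}\} \subset T$, and Lemma \ref{lem-may272} gives $\{4^{h-1}+1,\ldots,\delta\} \subset T$. Taking the union yields the first bullet.

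For the second bullet, I would exploit the fact that $q = 4^h \equiv -1 \pmod{n}$, hence for any $i$,
\[
4^h \cdot i \equiv -i \equiv n - i \pmod{n}.
\]
Since $4^h \cdot i$ lies in the same $4$-cyclotomic coset modulo $n$ as $i$, we conclude $C_i^{(4)} = C_{n-i}^{(4)}$ for every $i$. In particular, $T$ is closed under the involution $i \mapsto n - i$. Applying this involution to the set $\{1,2,\ldots,\delta\} \subset T$ from the first bullet produces $\{n-1, n-2,\ldots,n-\delta\} \subset T$, which is the second bullet.

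I do not anticipate any real obstacle: the combinatorial work has already been done in Lemmas \ref{lem-may271} and \ref{lem-may272}, and the only new ingredient is the coset symmetry $C_i^{(4)} = C_{n-i}^{(4)}$, which follows immediately from $4^h \equiv -1 \pmod{n}$. The proof should therefore be one short paragraph.
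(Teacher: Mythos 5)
Your proposal is correct and matches the paper's own proof: the first bullet is the union of Lemmas \ref{lem-may271} and \ref{lem-may272}, and the second follows from $4^h \equiv -1 \pmod{n}$, which shows $T$ (a union of $4$-cyclotomic cosets) is closed under $i \mapsto n-i$. No gaps.
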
 

\begin{proof}
Combining Lemmas \ref{lem-may271} and \ref{lem-may272} proves the first statement. 
Note that $a4^h \equiv n-a \pmod{n}$. The second conclusion then follows. 
\end{proof}

Define 
$$
T^c= 
\left\{\sum_{i=0}^{h-1}a_i4^i : a_0 \in \{2, 3\}, \ a_i \in \{1,2\} \mbox{ for } 1 \leq i \leq h-1 \right\}. 
$$

\begin{lemma}\label{lem-may274} 
$T^c$ is the union of some $4$-cyclotomic cosets modulo $n$. 
\end{lemma}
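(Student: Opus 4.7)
The plan is to show that $T^c$ is closed under multiplication by $4$ modulo $n = 4^h+1$; since $T^c$ is finite, this closure property is equivalent to being a union of $4$-cyclotomic cosets modulo $n$.

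First I would take an arbitrary element $x = \sum_{i=0}^{h-1} a_i 4^i \in T^c$, with $a_0 \in \{2,3\}$ and $a_i \in \{1,2\}$ for $1 \le i \le h-1$, and compute
\begin{equation*}
4x \;=\; a_{h-1} 4^h \,+\, \sum_{i=1}^{h-1} a_{i-1} 4^i.
\end{equation*}
The key arithmetic fact is $4^h \equiv -1 \pmod n$, which turns the leading term into a small negative contribution. So, modulo $n$,
\begin{equation*}
4x \;\equiv\; -a_{h-1} \,+\, a_0 \cdot 4 \,+\, \sum_{i=2}^{h-1} a_{i-1} 4^i.
\end{equation*}
Because $a_{h-1} \in \{1,2\}$ and $a_0 \in \{2,3\}$, the quantity $-a_{h-1} + a_0 \cdot 4$ is a positive integer lying in $\{6,7,10,11\}$, each of which rewrites uniquely in base $4$ as $b_0 + b_1 \cdot 4$ with $b_0 \in \{2,3\}$ and $b_1 \in \{1,2\}$.

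I would then split into the four cases $(a_{h-1}, a_0) \in \{1,2\} \times \{2,3\}$ and verify the rewriting in a small table: $(1,2)\!\to\!(3,1)$, $(1,3)\!\to\!(3,2)$, $(2,2)\!\to\!(2,1)$, $(2,3)\!\to\!(2,2)$. In every case the new low-order coefficients $b_0, b_1$ satisfy the $T^c$-constraints, while the higher coefficients become $b_i = a_{i-1} \in \{1,2\}$ for $2 \le i \le h-1$, which also meet the constraints. Moreover, the sum of the new digits is at most $3 + 2(h-1) < 4^h$, so there is no further carry and the representation already gives the canonical base-$4$ expansion of $4x \bmod n$.

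Thus $4x \bmod n \in T^c$ for every $x \in T^c$, so $T^c$ is stable under the Frobenius action $x \mapsto 4x \pmod n$, and hence is a union of $4$-cyclotomic cosets modulo $n$. The only real obstacle is bookkeeping: making sure the borrow created by replacing $a_{h-1} 4^h$ with $-a_{h-1}$ is absorbed cleanly in the $a_0\cdot 4$ term without propagating further and without creating a digit outside the allowed sets; the case analysis above handles exactly this.
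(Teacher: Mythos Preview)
Your proof is correct and follows essentially the same approach as the paper: both show $T^c$ is stable under $x\mapsto 4x\pmod{n}$ by using $4^h\equiv -1\pmod n$ and then checking the four cases for $(a_0,a_{h-1})$, obtaining exactly the same new low-order digits. One small remark: your sentence about ``the sum of the new digits'' is not quite the right justification---what matters is that each new digit $b_i$ is a valid base-$4$ digit (so no carry occurs) and that an $h$-digit base-$4$ number is automatically $<4^h<n$; but this is only a phrasing issue and the argument is sound.
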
 

\begin{proof}
Recall that $n=4^h+1$. Let $a \in T^c$ and $a=\sum_{i=0}^{h-1}a_i4^i$, where $a_0 \in \{2, 3\}, \ a_i \in \{1,2\} \mbox{ for } 1 \leq i \leq h-1$. It is easily verified that 
\begin{eqnarray*}
4a \bmod{n} = 
\left\{ 
\begin{array}{ll}
a_{h-2}4^{h-1}+ \cdots + a_14^2 + 4 + 3 & \mbox{ if } (a_0, a_{h-1})=(2,1), \\
a_{h-2}4^{h-1}+ \cdots + a_14^2 + 4 + 2 & \mbox{ if } (a_0, a_{h-1})=(2,2), \\
a_{h-2}4^{h-1}+ \cdots + a_14^2 + 2 \times 4 + 3 & \mbox{ if } (a_0, a_{h-1})=(3,1), \\
a_{h-2}4^{h-1}+ \cdots + a_14^2 + 2 \times 4 + 2 & \mbox{ if } (a_0, a_{h-1})=(3,2). 
\end{array}
\right. 
\end{eqnarray*} 
This completes the proof. 
\end{proof}

\begin{lemma}\label{lem-may275} 
$T$ and $T^c$ partition $\Z_n=\{0,1, \ldots, n-1\}$.  
\end{lemma}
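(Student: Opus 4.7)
The plan is this. Since $T$ is a union of $4$-cyclotomic cosets modulo $n$ by construction and $T^c$ is one by Lemma~\ref{lem-may274}, and since these cosets partition $\Z_n$, I need only check (i) $T \cap T^c = \emptyset$ and (ii) $\Z_n \setminus T^c \subseteq T$.

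For (i), any $a \in T^c$ has base-$4$ digits satisfying $a_0 \geq 2$ and $a_i \geq 1$ for $1 \leq i \leq h-1$, so
\[
a \;\geq\; 2 + 4 + \cdots + 4^{h-1} \;=\; \frac{4^h + 2}{3} \;>\; 4^{h-1}.
\]
By Lemma~\ref{lem-may274} the entire $4$-cyclotomic coset of $a$ lies in $T^c$ and hence above $4^{h-1}$, so it misses $\{0, 1, \ldots, 4^{h-1}\}$, placing $a$ outside $T$.

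For (ii), given $c \in \Z_n \setminus T^c$, I would dispense with the trivial cases $c \leq 4^{h-1}$ and $c = 4^h$ (which lies in the coset of $1$), then write $c = \sum_{i=0}^{h-1} c_i 4^i$ with $c_{h-1} \geq 1$. The central computational input is that when $b_0, b_{h-1} \geq 1$, the congruence $4^h \equiv -1 \pmod{n}$ yields the \emph{shift-with-twist}
\[
4b \bmod n \;=\; (4 - b_{h-1}) + (b_0 - 1)\cdot 4 + b_1 \cdot 4^2 + \cdots + b_{h-2} \cdot 4^{h-1},
\]
which allows one to track the leading base-$4$ digit under iteration. Since $c \notin T^c$, the digits of $c$ fail the $T^c$ pattern in one of four ways, and I would split accordingly: (a) if $c_0 = 0$, then $c/4$ lies in the coset of $c$ and is $\leq 4^{h-1}$; (b) if $c_0 = 1$, then $4c \bmod n$ carries a zero at position $1$, reducing to (c); (c) if $c_j = 0$ for some $j \in \{1, \ldots, h-1\}$, taking $j$ maximal one iterates the shift-with-twist so that the leading digit of $(4^k c) \bmod n$ traces the nonzero values $c_{h-1}, c_{h-2}, \ldots, c_{j+1}$ and then hits $c_j = 0$ at $k = h-1-j$, producing an iterate $< 4^{h-1}$; (d) if $c_j = 3$ for some $j \geq 1$ with $c_0 \in \{2,3\}$, the coset member $n - c$ has digits $(5 - c_0,\; 3 - c_1,\; \ldots,\; 3 - c_{h-1})$---computed without carry because $c_0 \geq 2$---so a zero appears at position $j$, and sub-case (c) applied to $n - c$ finishes the argument.

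The main technical obstacle is verifying in sub-case (c) that the shift-with-twist formula continues to apply through all the intermediate iterations. The choice of $j$ maximal ensures every intermediate leading digit is nonzero, so no borrow cascade disturbs the recursion before the zero digit rises to the top position; once this is confirmed, the coset of $c$ meets $\{0, 1, \ldots, 4^{h-1}\}$ and $c \in T$, completing the partition.
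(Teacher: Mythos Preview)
Your plan is sound and the overall strategy matches the paper's: both establish disjointness and then run a base-$4$ digit case analysis to show that every $c\notin T^c$ has a coset representative in $\{0,\ldots,4^{h-1}\}$. The execution differs in a few places. For disjointness the paper routes through Lemma~\ref{lem-may273}, whereas your argument (the whole coset of any $a\in T^c$ stays above $4^{h-1}$, hence misses the generating set of $T$) is more direct. In the main case analysis the paper takes the \emph{smallest} index $t$ with $a_t=0$ (respectively $a_t=3$) and computes $a\cdot 4^{h-1-t}\bmod n$ in one shot, showing it lands below $4^{h-1}$ (respectively above $\max=(2\cdot 4^h+1)/3$, where Lemma~\ref{lem-may273} then applies); you instead take the \emph{largest} zero index and iterate the shift-with-twist step by step, and for the digit-$3$ case you pass to $n-c$ and recycle sub-case~(c). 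The paper's closed-form computations spare you the check that the shift-with-twist hypotheses ($b_0\ge 1$, $b_{h-1}\ge 1$) persist through every intermediate step; on the other hand, your $n-c$ reduction in~(d) is cleaner than the paper's direct lower bound in the $a_t=3$ case. One small thing to make explicit when you write it up: after the reduction in~(b) you may land on an element with constant digit equal to~$1$ (when $c_{h-1}=3$), so note that sub-case~(c) only needs $c_0\ge 1$, not $c_0\in\{2,3\}$.
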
 

\begin{proof} 
The conclusion can be easily verified for $h \in \{1,2\}$. We consider below the case $h \geq 3$. 
By definition, the least integer in $T^c$ is 
$$ 
\min=(4^{h-1}+ \cdots + 4)+2=\frac{2^{2h}+2}{3},  
$$  
and the largest integer  in $T^c$ is  
$$ 
\max=2(4^{h-1}+ \cdots + 4)+3=\frac{2^{2h+1}+1}{3}. 
$$ 
It then follows from Lemmas \ref{lem-may273} and \ref{lem-may274} that 
$T$ and $T^c$ are disjoint. 

We now prove that every integer $a$ with $\min \leq a \leq \max$ is an element of 
either $T^c$ or $T$. Let 
$$ 
a=a_{h-1}4^{h-1}+a_{h-2}4^{h-2}+ \cdots + a_14+ a_0, 
$$ 
where $0 \leq a_i \leq 3$. Since $\min \leq a \leq \max$, we deduce that $a_{h-1} \in \{1,2\}$. 
We will carry out our discussions in the following cases. 

\subsubsection*{The case $a_0=1$} 
In this case, we have  
\begin{eqnarray*}
a4^{h-1} \bmod{n} 
&=& 4^{h-1} - (a_{h-1}4^{h-2} + \cdots + a_24+a_1) \\
&=& 4^{h-1}-\frac{a-1}{4} \\
&\leq & \frac{2 \times 4^h+1}{3 \times 4} \\
&<& \delta. 
\end{eqnarray*} 
This means that $ a4^{h-1} \bmod{n} \in T$. Consequently, $a \in T$. 

\subsubsection*{The case $a_0=0$} 
By assumption we have $a_0=0$ and $a_0 \leq \max$. Then we have 
$$
a \leq 2(4^{h-1}+4^{h-2}+ \cdots + 4).  
$$  
It then follows that 
\begin{eqnarray*}
\frac{a}{4} 
&=& a_{h-1}4^{h-2} + \cdots + a_24 + a_1 \\
&\leq& 2(4^{h-2} + \cdots + 1) \\
&=& \frac{2(4^{h-1}-1)}{3} \\
&<& 4^{h-1}. 
\end{eqnarray*}
This means that $a/4 \in T$. As a result, $a \in T$. 

\subsubsection*{The case $a_0 \in \{2,3\}$ and $a_i=0$ for some $1 \leq i \leq h-2$} 

Let $t$ be the smallest integer such that $1 \leq t \leq h-2$ and $a_t=0$. Then $a_{t-1} \neq 0$. 
In this case, we have 
\begin{eqnarray*}
a4^{h-1-t} \bmod{n} 
&=& a_t4^{h-1} + a_{t-1}4^{h-2} + \cdots + a_04^{h-1-t} -(a_{h-1}4^{h-2-t} + \cdots +a_{t+1}) \\
&<& 3(4^{h-2} + \cdots + 4^{h-1-t} ) \\ 
&=& 4^{h-1}-4^{h-1-t} \\
&<& 4^{h-1}.  
\end{eqnarray*}
This means that $ a4^{h-1-t} \bmod{n} \in T$. Consequently, $a \in T$.   

\subsubsection*{The case $a_0 \in \{2,3\}$ and $a_i \neq 0$ for all $1 \leq i \leq h-2$} 

Recall that $a_{h-1} \in \{1,2\}$. If there is no integer $i$ such that $1 \leq i \leq h-2$ and $a_i=3$. 
By the definition of $T$, we have $a \in T$. 

If there is an integer $i$ such that $1 \leq i \leq h-2$ and $a_i=3$, let $t$ be the smallest integer 
such that $1 \leq t \leq h-2$ and $a_t=3$. In this case, we have 
\begin{eqnarray*}
a4^{h-1-t} \bmod{n} 
&=& a_t4^{h-1} + a_{t-1}4^{h-2} + \cdots + a_04^{h-1-t} -(a_{h-1}4^{h-2-t} + \cdots +a_{t+1}) \\ 
& \geq & 3 \times 4^{h-1} +4^{h-2} + \cdots + 4^{h-t} + 2 \times 4^{h-1-t} - \\
& &            (2 \times 4^{h-2-t}+3 \times 4^{h-3-t} + \cdots + 3 \times 1)\\
&=& \frac{(2 \times 4^h+1) -(2\times 4^{h-1}+2-4^{h-2-t})}{3} \\
&>& \frac{2 \times 4^h+1}{3}. 
\end{eqnarray*}
This means that $ a4^{h-1-t} \bmod{n} \in T$. Consequently, $a \in T$. 

Summarizing the conclusions in the cases above, we deduce that every integer $a$ with 
$\min \leq a \leq \max$ is an element of either $T$ or $T^c$. This completes the proof of 
this lemma. 
\end{proof}

The first main result of this paper is documented in the following theorem. 

\begin{theorem}\label{thm-mainjune12}
The cyclic code $(\C_{(q+4)/4}^\perp)|_{\gf(4)}$ has parameters $[2^{2h}+1, 2^h, d^\perp]$, where 
$$ 
d^\perp \geq \frac{2(2^{2h}+2)}{3}. 
$$ 
The cyclic code $((\C_{(q+4)/4}^\perp)|_{\gf(4)})^\perp$ has parameters $[2^{2h}+1, 2^{2h}+1-2^h, d\geq 3]$. 
\end{theorem}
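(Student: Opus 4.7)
The plan is to read off all four assertions from the defining-set description of $(\C_{(q+4)/4}^\perp)|_{\gf(4)}$ recorded immediately before Lemma \ref{lem-may271}, combined with the partition of $\Z_n$ into $T$ and $T^c$ established in Lemma \ref{lem-may275}.

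The length $n = 2^{2h}+1$ is inherited from the parent code. For the dimension, I would observe that since the generator polynomial of $(\C_{(q+4)/4}^\perp)|_{\gf(4)}$ has zero set $\{\beta^j : j \in T\}$, its degree equals $|T|$ and the code has dimension $n - |T| = |T^c|$ by Lemma \ref{lem-may275}. From the definition of $T^c$, there are two choices for $a_0$ and two choices for each of the $h-1$ higher digits, so $|T^c| = 2 \cdot 2^{h-1} = 2^h$. For the bound on $d^\perp$, Lemma \ref{lem-may273} shows that $T$ contains both $\{0,1,\ldots,\delta\}$ and $\{n-\delta,\ldots,n-1\}$, where $\delta = (4^h-1)/3$; concatenating these blocks modulo $n$ produces $2\delta+1$ consecutive integers in $T$, and the BCH bound then gives $d^\perp \geq 2\delta+2 = 2(2^{2h}+2)/3$.

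For the dual code $((\C_{(q+4)/4}^\perp)|_{\gf(4)})^\perp$, the dimension is $n - 2^h = 2^{2h}+1-2^h$. Since $4^h \equiv -1 \pmod n$, multiplication by $4^h$ sends $i$ to $-i$, so every $4$-cyclotomic coset modulo $n$ is closed under negation; therefore $-T = T$ and the defining set of the dual is exactly $T^c$. To obtain $d \geq 3$, I would simply exhibit two consecutive integers in $T^c$: taking $a_0 \in \{2,3\}$ with $a_1 = \cdots = a_{h-1} = 1$ produces the elements $(4^h+2)/3$ and $(4^h+5)/3$, which differ by exactly $1$, so the BCH bound yields $d \geq 3$.

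All of the serious combinatorial work has been absorbed into Lemmas \ref{lem-may271}--\ref{lem-may275}; conditional on those, the theorem is essentially a bookkeeping exercise combining two standard tools (the defining-set viewpoint of cyclic codes and the BCH bound). The main obstacle in the overall argument is really the case analysis in Lemma \ref{lem-may275} establishing $T \cup T^c = \Z_n$, without which both the exact dimension count and the identification of the dual's defining set would be unavailable.
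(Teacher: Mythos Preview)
Your proposal is correct and follows essentially the same route as the paper: both extract the dimension from $|T^c|=2^h$ via Lemma~\ref{lem-may275}, both obtain the bound on $d^\perp$ from the run of $2\delta+1$ consecutive elements of $T$ around $0$ (the paper phrases this via the min/max of $T^c$, you via Lemma~\ref{lem-may273}), and both exhibit the same pair of consecutive integers $\frac{4^h+2}{3},\,\frac{4^h+5}{3}$ to get $d\ge 3$. Your one small addition---noting that $4^h\equiv -1\pmod n$ forces every $4$-cyclotomic coset to be negation-closed, so that the dual's defining set $\{n-i:i\in T^c\}$ is simply $T^c$ itself---is a mild streamlining that the paper only records later (in (\ref{eqn-August12201})), but it does not change the argument in any substantive way.
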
 

\begin{proof} 
It follows from (\ref{eqn-june71}) that the generator and check polynomials of $(\C_{(q+4)/4}^\perp)|_{\gf(4)}$ are 
$$ 
g'_{(q+4)/4}(x)=\lcm\{\m'_{\beta^0}(x),  \m'_{\beta^1}(x), \ldots, \m'_{\beta^{q/4}}(x)\}
$$
and 
$$ 
h'_{(q+4)/4}(x)=\lcm\{\m'_{\beta^{(q+4)/4}}(x),  \m'_{\beta^{(q+8)/4}}(x), \ldots, \m'_{\beta^{q/2}}(x)\},
$$
respectively, where $\m'_{\beta^{i}}(x)$ denotes the minimal polynomial of $\beta^i$ over $\gf(4)$.

By definition, $|T^c|=2^h$. It then follows from Lemma \ref{lem-may275} that the degree of the check polynomial of 
 $(\C_{(q+4)/4}^\perp)|_{\gf(4)}$ is equal to $2^h$. This means that the dimension of the code 
  $(\C_{(q+4)/4}^\perp)|_{\gf(4)}$ is $2^h$.

By definition, the least integer in $T^c$ is 
$$ 
\min=(4^{h-1}+ \cdots + 4)+2=\frac{2^{2h}+2}{3},  
$$  
and the largest integer  in $T^c$ is  
$$ 
\max=2(4^{h-1}+ \cdots + 4)+3=\frac{2^{2h+1}+1}{3}. 
$$ 
It then follows that the generator polynomial of $(\C_{(q+4)/4}^\perp)|_{\gf(4)}$ has the following roots 
$$ 
\{ \beta^{-(2^{2h}-1)/3}, \ldots, \beta^{-2}, \beta^{-1}, 0, \beta^1, \beta^2, \ldots, \beta^{(2^{2h}-1)/3}\}. 
$$ 
The lower bound on $d^\perp$ follows from the BCH bound on cyclic codes. 

The dimension of $(\C_{(q+4)/4}^\perp|_{\gf(4)})^\perp$ follows from that of $(\C_{(q+4)/4}^\perp)|_{\gf(4)}$. 
Note that the defining set of the cyclic code $(\C_{(q+4)/4}^\perp|_{\gf(4)})^\perp$ is 
\begin{eqnarray}\label{eqn-may281}
\{n-i: i \in T^c\}. 
\end{eqnarray} 
By definition, 
the set in (\ref{eqn-may281}) contains the following two consecutive integers 
$$ 
n-(4^{h-1}+4^{h-2}+\cdots+4 +3), \ n-(4^{h-1}+4^{h-2}+\cdots+4 +2). 
$$
It then follows from the BCH bound that the minimum distance of $(\C_{(q+4)/4}^\perp|_{\gf(4)})^\perp$ is 
at least $3$. 
\end{proof} 

\begin{example} 
Let $h=1$. Then the cyclic code $(\C_{(q+4)/4}^\perp)|_{\gf(4)}$ has parameters $[5, 2, 4]$ and weight enumerator 
$1+15z^{5}$. The cyclic code $((\C_{(q+4)/4}^\perp)|_{\gf(4)})^\perp$ has parameters $[5, 3, 3]$.  Both codes are MDS and optimal. 
\end{example} 

\begin{example} 
Let $h=2$. Then the cyclic code $(\C_{(q+4)/4}^\perp)|_{\gf(4)}$ has parameters $[17, 4, 12]$ and weight enumerator 
$1+204z^{12}+51z^{16}$, and is distance-optimal. The cyclic code $((\C_{(q+4)/4}^\perp)|_{\gf(4)})^\perp$ has parameters $[17, 13, 4]$ 
and is distance-optimal.  
\end{example} 

\begin{example} 
Let $h=3$. Then the cyclic code $(\C_{(q+4)/4}^\perp)|_{\gf(4)}$ has parameters $[65, 8, 44]$ and weight enumerator 
$1+18720z^{44}+16380z^{48}+30240z^{52}+195z^{64}$, and is distance-optimal. The cyclic code $((\C_{(q+4)/4}^\perp)|_{\gf(4)})^\perp$ has parameters 
$[65, 57, 5]$ and is distance-optimal.  
\end{example} 

\begin{example} 
Let $h=4$. Then the cyclic code $(\C_{(q+4)/4}^\perp)|_{\gf(4)}$ has parameters $[257, 16, 172]$ and weight enumerator 
\begin{eqnarray*}
1+ 28422144z^{172}+  25794576z^{176}+ 258365184z^{180} + 234877440z^{184} + \\ 
1160570880z^{188} + 469178172z^{192}+ 1348867584z^{196}+ 301985280z^{200} + \\
394752000z^{204}+ 41942400z^{208}+ 30198528z^{212}+ 12336z^{240}+ 771z^{256}.  
\end{eqnarray*}
The cyclic code $((\C_{(q+4)/4}^\perp)|_{\gf(4)})^\perp$ has parameters $[257, 241, 8]$.  
\end{example} 

These examples indicate that the subfield subcode $\left(\C_{(q+4)/4}^\perp\right)_{\gf(4)}$ and its dual 
are distance-optimal when $h \in \{1,2,3\}$. Hence, the subfield subcode 
$\left(\C_{(q+4)/4}^\perp\right)_{\gf(4)}$ and its dual 
are very interesting in terms of the error-correcting capability.

\section{The subfield code $\C_{(q+4)/4}^{(4)}$ and its dual}

Let $\gf(r^h)$ be a finite field with $r^h$ elements, where $r$ is a power of a prime and $h$ is a positive integer. 
Given an $[n,k]$ code $\C$ over $\gf(r^h)$, we construct a new $[n, k']$ code $\C^{(r)}$ 
over $\gf(r)$ as follows. Let $G$ be a generator matrix of $\C$. Take a basis of $\gf(r^h)$ 
over $\gf(r)$. Represent each entry of $G$ as an $h \times 1$ column vector of $\gf(r)^h$ 
with respect to this basis, and replace each entry of $G$ with the corresponding $h \times 1$ column vector of $\gf(r)^h$. In this way, $G$ is modified into a $kh \times n$ matrix over 
$\gf(r)$, which generates a new code $\C^{(r)}$ over $\gf(r)$ with length $n$, which is called the subfield code of $\C$. 
By definition, the dimension $k'$ of $\C^{(r)}$ satisfies $k'\leq hk$. 
It is known that 
the subfield code $\C^{(r)}$ of $\C$ is independent of the choices of both $G$ and the 
basis of $\gf(r^h)$ over $\gf(q)$ \cite{DH19}. 

The trace representation of the subfield code  $\C^{(r)}$ of $\C$ is given in the next theorem~\cite{DH19}. 

\begin{theorem}~\cite{DH19}\label{thm-tracerepresentation}
Let $\C$ be an $[n,k]$ linear code over $\gf(r^h)$. Let $G=[g_{ij}]_{1\leq i \leq k, 1\leq j \leq n}$ be a generator matrix of $\C$. Then the trace representation of $\C^{(r)}$ is given by
\begin{eqnarray*}
\C^{(r)}= \left\{ 
\begin{array}{r}
\left(\tr_{r^h/r}\left(\sum_{i=1}^{k}a_ig_{i1}\right),
\ldots,\tr_{r^h/r}\left(\sum_{i=1}^{k}a_ig_{in}\right)\right): \\
a_1,\ldots,a_k\in \gf(r^h)
\end{array}
\right\},  
\end{eqnarray*} 
where $\tr_{r^h/r}(x)$ denotes the trace function from $\gf(r^h)$ to $\gf(r)$. 
\end{theorem}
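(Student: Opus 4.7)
My plan is to unfold the definition of $\C^{(r)}$ via a fixed basis of $\gf(r^h)$ over $\gf(r)$, rewrite the resulting $\gf(r)$-linear combinations of rows using the trace via the dual basis, and then observe that the coefficient vector can be repackaged as an arbitrary $k$-tuple $(a_1,\ldots,a_k) \in \gf(r^h)^k$.

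Concretely, I would fix a basis $\{\alpha_1,\ldots,\alpha_h\}$ of $\gf(r^h)$ over $\gf(r)$ together with its trace-dual basis $\{\beta_1,\ldots,\beta_h\}$, characterized by $\tr_{r^h/r}(\alpha_s\beta_t)=\delta_{st}$. For each entry $g_{ij}$ of $G$, the coordinates of $g_{ij}$ with respect to $\{\alpha_1,\ldots,\alpha_h\}$ are $\tr_{r^h/r}(g_{ij}\beta_l)$ for $l=1,\ldots,h$. Hence the $kh\times n$ matrix defining $\C^{(r)}$ has as its $(i,l)$-th row the vector $\bigl(\tr_{r^h/r}(g_{i1}\beta_l),\ldots,\tr_{r^h/r}(g_{in}\beta_l)\bigr)$, and its row space over $\gf(r)$ consists of all vectors
\begin{equation*}
\left(\tr_{r^h/r}\Bigl(\sum_{i=1}^{k}\sum_{l=1}^{h} c_{i,l}\beta_l \,g_{ij}\Bigr)\right)_{j=1}^{n},\qquad c_{i,l}\in\gf(r),
\end{equation*}
where I have used $\gf(r)$-linearity of the trace to pull the $c_{i,l}$ inside.

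The key observation is then that the map $(c_{i,1},\ldots,c_{i,h})\mapsto a_i:=\sum_{l=1}^{h}c_{i,l}\beta_l$ is a bijection $\gf(r)^h\to\gf(r^h)$, because $\{\beta_1,\ldots,\beta_h\}$ is a basis. Applying this coordinate by coordinate in $i$, the set of tuples $(c_{i,l})_{i,l}$ is in bijection with the set of tuples $(a_1,\ldots,a_k)\in\gf(r^h)^k$, and under this bijection the above displayed vector becomes exactly $\bigl(\tr_{r^h/r}(\sum_i a_i g_{ij})\bigr)_{j=1}^{n}$. This gives the claimed description of $\C^{(r)}$.

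The only potentially subtle point is verifying that the construction is truly independent of the choice of basis, but this drops out automatically from the trace-dual reformulation: the final parametrization involves only $\tr_{r^h/r}$ and the $g_{ij}$, with no residual dependence on $\{\alpha_l\}$ or $\{\beta_l\}$. I would state this independence as a short remark at the end rather than as a separate lemma, since the parametrized description in the theorem makes it manifest.
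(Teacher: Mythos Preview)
Your argument is correct. The dual-basis computation cleanly identifies the $\gf(r)$-row space of the expanded $kh\times n$ matrix with the set of trace vectors parametrized by $(a_1,\ldots,a_k)\in\gf(r^h)^k$, and the independence from the choice of basis indeed falls out because the final description involves only the trace and the entries $g_{ij}$.

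As for comparison: the paper does not supply its own proof of this theorem. It is quoted verbatim from~\cite{DH19} and used as a black box (in particular, to deduce that $\C^{(r)}=\tr_{r^h/r}(\C)$ and hence, via Delsarte, Theorem~\ref{thm-20june91}). So there is nothing in the present paper to contrast your approach against; your write-up simply fills in what the authors chose to cite.
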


The trace of a vector $\bc=(c_1, c_2, \ldots, c_n) \in \gf(r^h)^n$ is defined by 
$$ 
\tr_{r^h/r}(\bc)=\left(\tr_{r^h/r}(c_1),  \tr_{r^h/r}(c_2), \ldots, \tr_{r^h/r}(c_n) \right). 
$$ 
The trace code of a linear code $\C$ over $\gf(r^h)$ is defined by 
$$ 
\tr_{r^h/r}(\C)=\left\{ \tr_{r^h/r}(\bc): \bc \in \C \right\}. 
$$ 
It then follows from Theorem \ref{thm-tracerepresentation} that 
\begin{eqnarray*}
\C^{(r)} =\tr_{r^h/r}(\C). 
\end{eqnarray*} 
Hence, the subfield code $\C^{(r)}$ of a linear code $\C$ over $\gf(r^h)$ is in fact the trace code 
$\tr_{r^h/r}(\C)$ of $\C$. The following theorem then follows from the Delsarte theorem. 

\begin{theorem}\label{thm-20june91}
Let $\C$ be a linear code over $\gf(r^h)$. Then 
$$ 
\left((\C^\perp)|_{\gf(r)}\right)^\perp = \C^{(r)}. 
$$
\end{theorem}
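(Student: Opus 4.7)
The plan is to reduce the identity to the classical Delsarte theorem, which relates a trace code to the subfield subcode of the dual. The bridge is already established in the paragraph immediately preceding the theorem: by Theorem~\ref{thm-tracerepresentation} combined with the definition of the trace code, one has the equality
\begin{eqnarray*}
\C^{(r)} = \tr_{r^h/r}(\C).
\end{eqnarray*}
Thus it suffices to prove that $\left((\C^\perp)|_{\gf(r)}\right)^\perp = \tr_{r^h/r}(\C)$.

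First I would state Delsarte's theorem in the form most useful here: for any linear code $\cD$ over $\gf(r^h)$,
\begin{eqnarray*}
\left(\tr_{r^h/r}(\cD)\right)^\perp = \cD^\perp \cap \gf(r)^n = (\cD^\perp)|_{\gf(r)}.
\end{eqnarray*}
Applying this with $\cD = \C$ gives $\left(\tr_{r^h/r}(\C)\right)^\perp = (\C^\perp)|_{\gf(r)}$. Taking duals of both sides and using the fact that both codes live in $\gf(r)^n$ so that double-dualization returns the original code, I obtain
\begin{eqnarray*}
\tr_{r^h/r}(\C) = \left((\C^\perp)|_{\gf(r)}\right)^\perp,
\end{eqnarray*}
which combined with $\C^{(r)} = \tr_{r^h/r}(\C)$ yields the claim.

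Since Delsarte's theorem is invoked as a black box, there is essentially no obstacle; the only mild subtlety is checking that the double-dual of a code over $\gf(r)$ is indeed the code itself, which is immediate from linear-algebra duality, and verifying that $\tr_{r^h/r}(\C)$ is a \emph{linear} code over $\gf(r)$ (which follows from $\gf(r)$-linearity of the trace map). If desired, a self-contained justification of the Delsarte identity can be included via the standard bilinear-form argument: a vector $\bv \in \gf(r)^n$ lies in $(\tr_{r^h/r}(\C))^\perp$ if and only if $\tr_{r^h/r}(\langle \bv, \bc\rangle) = 0$ for every $\bc \in \C$, which by nondegeneracy of the trace pairing is equivalent to $\langle \bv, \bc\rangle = 0$ for all $\bc \in \C$, i.e., $\bv \in \C^\perp \cap \gf(r)^n = (\C^\perp)|_{\gf(r)}$. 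This quick verification is the only place a calculation appears, and it is routine.
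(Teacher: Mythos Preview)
Your proposal is correct and follows exactly the approach indicated in the paper: the paper states only that the theorem ``follows from the Delsarte theorem'' after having established $\C^{(r)}=\tr_{r^h/r}(\C)$, and your argument is precisely the standard unpacking of that inference. The optional self-contained verification of the Delsarte identity is also correct (implicitly using the $\gf(r^h)$-linearity of $\C$ to replace $\bc$ by $\alpha\bc$ when invoking nondegeneracy of the trace), and it goes a step beyond what the paper provides.
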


By Theorem \ref{thm-20june91} and the definition of the subfield subcode, 
$$ 
\C|_{\gf(r)} \subseteq \left((\C^\perp)|_{\gf(r)}\right)^\perp = \C^{(r)}. 
$$
Hence, the subfield subcode $\C|_{\gf(r)}$ is a subcode of the subfield code $\C^{(r)}$. In general, 
the two codes are different.

The subfield code $\C^{(r)}$ of a code $\C$ over $\gf(r^h)$ may have good or bad parameters. 
To obtain a good  subfield code $\C^{(r)}$ over $\gf(r)$, the code  $\C$ over $\gf(r^h)$ must be 
well chosen. Several families of very good subfield codes were obtained in \cite{DH19,HD19,HDW20,WZ20}.  
The objective of this section is to present a family of very good subfield codes over $\gf(4)$ whose duals 
are also very good. Specifically, we have the following corollary, which follows from Theorem  \ref{thm-20june91}. 

\begin{corollary}\label{cor-mdssubcode}
Let notation be the same as before. We then have 
$$ 
\left(\C_{(q+4)/4}^{(4)}\right)^\perp = \left(\C_{(q+4)/4}^\perp\right)|_{\gf(4)}. 
$$
\end{corollary}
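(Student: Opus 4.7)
The plan is to observe that this corollary is essentially a direct specialization of Theorem \ref{thm-20june91}, followed by taking duals. No new combinatorial content is required; the work has already been done in establishing the Delsarte-type identity $((\C^\perp)|_{\gf(r)})^\perp = \C^{(r)}$ for an arbitrary linear code $\C$ over $\gf(r^h)$.

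First I would instantiate Theorem \ref{thm-20june91} with the choice $\C = \C_{(q+4)/4}$, viewed as a linear code over $\gf(q) = \gf(2^{2h}) = \gf(4^h)$, and with $r = 4$, $h$ playing the role of the extension degree. This is a legitimate choice because $q$ is by assumption a power of $4$ in the setup of Section \ref{sec-subfieldsubcode}. The conclusion of the theorem, applied in this setting, reads
\begin{equation*}
\left(\left(\C_{(q+4)/4}^\perp\right)|_{\gf(4)}\right)^\perp = \C_{(q+4)/4}^{(4)}.
\end{equation*}

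Next I would take the Euclidean dual of both sides over $\gf(4)$. Since the subfield subcode $(\C_{(q+4)/4}^\perp)|_{\gf(4)}$ is a linear code over $\gf(4)$, and the double dual of any linear code equals the code itself, the left-hand side collapses to $(\C_{(q+4)/4}^\perp)|_{\gf(4)}$, while the right-hand side becomes $(\C_{(q+4)/4}^{(4)})^\perp$. This yields exactly the stated identity.

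I do not expect any obstacle here: the only thing to check is that the involution $\C \mapsto \C^{\perp\perp} = \C$ is being applied to a linear code over the correct field (namely $\gf(4)$, where the duality pairing lives on both sides of the equation produced by Theorem \ref{thm-20june91}), and this is immediate from the definitions. Everything substantive, including the verification that $\C^{(r)} = \tr_{r^h/r}(\C)$ and the Delsarte-type identification, has already been established in Theorems \ref{thm-tracerepresentation} and \ref{thm-20june91}.
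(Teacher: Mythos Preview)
Your proposal is correct and matches the paper's own approach: the corollary is stated as following directly from Theorem~\ref{thm-20june91}, and your argument spells out exactly that specialization (with $\C=\C_{(q+4)/4}$, $r=4$) together with the double-dual step to flip the equality into the form stated.
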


It follows from Corollary \ref{cor-mdssubcode} and the examples in Section \ref{sec-subfieldsubcode} that  
the subfield code $\C_{(q+4)/4}^{(4)}$ and its dual 
are distance-optimal when $h \in \{1,2,3\}$. This shows that the subfield code $\C_{(q+4)/4}^{(4)}$ and its dual 
are very interesting in terms of the error-correcting capability. 

We point out that $(\C_{(q+4)/4})|_{\gf(4)}$ is indeed a proper subcode of $\C_{(q+4)/4}^{(4)}$. For example, 
when $m=4$, the former has parameters $[17, 5, 9]$ and the latter has parameters $[17, 13, 4]$.   

\section{An infinite family of conjectured $3$-designs supported by the codes $\C_{(q+4)/4}^{(4)}$}\label{sec-designappl}

\subsection{The support designs of linear codes}

Let $\cP$ be a set of $v \ge 1$ elements, where $v$ is an integer, and let $\cB$ be a set of $k$-subsets of $\cP$, where $k$ is
a positive integer with $1 \leq k \leq v$. Let $t$ be a positive integer with $t \leq k$. The pair $\bD := (\cP, \cB)$ becomes an 
incidence structure when the incidence relation is the set membership. The incidence structure 
$\bD = (\cP, \cB)$ is called a $t$-$(v, k, \lambda)$ {\em design}, or simply {\em $t$-design}, if every $t$-subset of $\cP$ is contained in exactly $\lambda$ elements of
$\cB$. The elements of $\cP$ are called points, and those of $\cB$ are referred to as blocks. 
The set $\cB$ is called the block set. 
The number of blocks in $\cB$ is usually denoted by $b$. 
Let $\binom{\cP}{k}$ denote the set of all $k$-subsets of $\cP$. Then $\left (\cP, \binom{\cP}{k} \right )$ is a $k$-$(v, k, 1)$ design, 
which is called a \emph{complete design}. 
  A $t$-design is called {\em simple} if $\cB$ does not contain
any repeated blocks.
We consider only simple $t$-designs with $v > k > t$.
A $t$-$(v,k,\lambda)$ design is referred to as a
{\em Steiner system} if $t \geq 2$ and $\lambda=1$,
and is denoted by $S(t,k, v)$. By definition, the parameters of a $t$-$(v,k, \lambda )$ design have the following relation: 
\begin{align*}
\binom{v}{t} \lambda =\binom{k}{t} b.
\end{align*}

There are different approaches to constructing $t$-designs. 
A coding-theoretic construction of $t$-designs is as follows. 
Let $\C$ be a $[v, \kappa, d]$ linear code over $\gf(q)$. 
For each $k$ with $A_k \neq 0$,  let $\cB_k(\C)$ denote
the set of the supports of all codewords with Hamming weight $k$ in $\C$, where the coordinates of a codeword
are indexed by $(p_1, \ldots, p_v)$. Let $\cP(\C)=\{p_1, \ldots, p_v\}$.  The incidence structure $(\cP(\C), \cB_k(\C))$
may be a $t$-$(v, k, \lambda)$ design for some positive integer $\lambda$, which is called a
\emph{support design} of the code $\C$, and is denoted by $\bD_k(\C)$. In such a case, we say that the codewords of weight $k$ 
in $\C$ support or hold a $t$-$(v, k, \lambda)$ design, and for simplicity, we say that $\C$ supports or holds a $t$-$(v, k, \lambda)$ design. 

The following theorem, established by Assumus and Mattson and called the Assmus-Mattson Theorem, 
shows that the pair $(\cP(\C), \cB_k(\C))$ defined by 
a linear code $\C$ is a $t$-design under certain conditions \cite{AM69}.

\begin{theorem}\label{thm-designAMtheorem}
Let $\C$ be a $[v,k,d]$ code over $\gf(q)$. Let $d^\perp$ denote the minimum distance of $\C^\perp$. 
Let $w$ be the largest integer satisfying $w \leq v$ and 
$$ 
w-\left\lfloor  \frac{w+q-2}{q-1} \right\rfloor <d. 
$$ 
Define $w^\perp$ analogously using $d^\perp$. Let $(A_i)_{i=0}^v$ and $(A_i^\perp)_{i=0}^v$ denote 
the weight distribution of $\C$ and $\C^\perp$, respectively. Fix a positive integer $t$ with $t<d$, and 
let $s$ be the number of $i$ with $A_i^\perp \neq 0$ for $1 \leq i \leq v-t$. Suppose $s \leq d-t$. Then 
\begin{itemize}
\item all the codewords of weight $i$ in $\C$ support a $t$-design provided $A_i \neq 0$ and $d \leq i \leq w$, and 
\item all the codewords of weight $i$ in $\C^\perp$ support a $t$-design provided $A_i^\perp \neq 0$ and 
         $d^\perp \leq i \leq \min\{v-t, w^\perp\}$. 
\end{itemize}
\end{theorem}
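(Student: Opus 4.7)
The plan is to reduce the $t$-design assertions to a constancy claim about shortened codes, and then resolve that claim via MacWilliams duality combined with a Pless power moment argument. Fix a weight $i$ with $A_i \neq 0$ and a $t$-subset $T \subseteq \{1, \ldots, v\}$. For each $S \subseteq T$, let $\C(S) = \{c \in \C : c_j = 0 \text{ for all } j \in S\}$ and write $A_i(S)$ for its weight-$i$ population. By inclusion-exclusion, the number of weight-$i$ codewords of $\C$ whose support contains $T$ is
$$\lambda_{i,T} = \sum_{S \subseteq T} (-1)^{|S|} A_i(S),$$
so it suffices to prove that $A_i(S)$ depends only on $|S|$, not on $S$ itself.

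To establish that, I would pass to the dual side. Viewing $\C(S)$ on the $v - |S|$ coordinates outside $S$, its dual is the puncture $(\C^\perp)^S$, and the weight enumerator of $\C(S)$ is the MacWilliams transform of that of $(\C^\perp)^S$. Puncturing cannot create new nonzero weights, so the nonzero weights of $(\C^\perp)^S$ lie among those of $\C^\perp$ in the range $[1, v-t]$; by hypothesis there are at most $s$ such values, so the weight distribution of $(\C^\perp)^S$ is parameterized by $\leq s$ unknown multiplicities.

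The decisive step is the Pless power moment computation. Since $\C(S) \subseteq \C$, the minimum distance of $\C(S)$ is at least $d$, so $A_1(S) = \cdots = A_{d-1}(S) = 0$. The Pless moment identities express $\sum_j j^r A_j((\C^\perp)^S)$ for $0 \leq r \leq d-1$ in terms of $A_0(S), \ldots, A_r(S)$, $q$, and $v - |S|$; because those coefficients are $1, 0, \ldots, 0$ irrespective of $S$, one obtains a normalization at $r = 0$ that pins down $|\C(S)|$ together with $d - 1 \geq s$ further linear equations in $\leq s$ unknowns. The coefficient matrix is Vandermonde in the distinct nonzero weights of $(\C^\perp)^S$, hence invertible, and the unique solution depends on $S$ only through $|S|$. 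Inverting MacWilliams then forces the weight distribution of $\C(S)$ itself to be a function of $|S|$ alone. The conclusion for $\C^\perp$ follows by feeding the already-established constancy of the weight distribution of $(\C^\perp)^S$ into the analogous inclusion-exclusion on $\C^\perp$.

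Finally, to convert codeword counts into block counts one invokes the conditions on $w$ and $w^\perp$. For $d \leq i \leq w$, the inequality $w - \lfloor (w+q-2)/(q-1) \rfloor < d$ forces any two weight-$i$ codewords sharing the same support to be scalar multiples, because otherwise a suitable nontrivial combination would have weight strictly less than $d$. Hence each support is realized by exactly $q - 1$ codewords and $\lambda_{i,T}/(q-1)$ counts distinct blocks containing $T$, independently of $T$; the parallel statement for $\C^\perp$ via $w^\perp$ is identical in form. The main obstacle will be the moment step: one must check that the zero-th Pless moment genuinely fixes $|\C(S)|$ as a function of $|S|$ (rather than leaving it as an independent unknown) and that the Vandermonde system extracted from the first $d - 1$ moments has full column rank given only $\leq s$ unknowns — this is precisely where the hypothesis $s \leq d - t$, rather than a looser bound, is needed.
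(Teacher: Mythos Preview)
The paper does not prove this theorem; it quotes the Assmus--Mattson theorem as a classical result with a citation to \cite{AM69} and moves on, so there is no in-paper argument against which to compare your proposal.

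On the substance of your sketch: the overall architecture (inclusion--exclusion reducing to constancy of $A_i(S)$, passage to the punctured dual via MacWilliams, a linear system from power moments, then the $w$-condition to convert codeword counts to support counts) is the standard one, but one step is wrong as written. You assert that ``puncturing cannot create new nonzero weights, so the nonzero weights of $(\C^\perp)^S$ lie among those of $\C^\perp$ in the range $[1,v-t]$.'' Puncturing \emph{can} create new weights: a codeword of weight $j$ in $\C^\perp$, punctured at $S$, drops to some weight in $[\,j-|S|,\,j\,]$, so $(\C^\perp)^S$ may well have nonzero weights that $\C^\perp$ lacks. Your bound of at most $s$ unknowns in the Vandermonde system is therefore unjustified, and with it the rank argument. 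The classical proofs avoid this by choosing different unknowns: for each of the $s$ weights $j\in[1,v-t]$ with $A_j^\perp\neq 0$ one tracks, say, the number of weight-$j$ codewords of $\C^\perp$ whose support contains the fixed $t$-set (rather than the weight distribution of the punctured code itself), and reads off the $d-t$ independent linear relations from the MacWilliams identities applied at length $v$. With that bookkeeping the unknown count really is $s$, and the hypothesis $s\le d-t$ does exactly the work you intend. Your zeroth-moment worry is also real: $|\C(S)|$ need not be constant in $|S|$ a priori when $t\ge d^\perp$, which is another reason to set up the system in the original code rather than in the shortened/punctured pair.
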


The Assmus-Mattson Theorem above is a useful tool in constructing $t$-designs from linear codes 
(see, for example, \cite{Dingbook18}, \cite{Tonchev} and \cite{Tonchevhb}), but does not 
characterize all linear codes supporting $t$-designs. The reader is referred to \cite{TDX19} for a generalized 
Assmus-Mattson theorem.

Another sufficient condition for the incidence structure $(\cP(\C), \cB_k(\C))$ to be a $t$-design is via the 
automorphism group of the linear code $\C$. Before introducing this  sufficient condition, we need to recall 
several different automorphism groups of a linear code.

The set of coordinate permutations that map a code $\C$ to itself forms a group, 
which is referred to as the \emph{permutation automorphism group} of $\C$
and denoted by $\PAut(\C)$. If the length of $\C$ is $n$, then $\PAut(\C)$ is a subgroup of the
\emph{symmetric group} $\Sym_n$.

A \emph{monomial matrix} over $\gf(q)$ is a square matrix that has exactly one
nonzero element of $\gf(q)$  in each row and column. It is easily seen that a monomial matrix $M$ can be written either in
the form $DP$ or the form $PD_1$, where $D$ and $D_1$ are both diagonal matrices and $P$ is a permutation
matrix. Clearly, the set of monomial matrices that map $\C$ to itself forms a group denoted by $\MAut(\C)$,  which is called the
\emph{monomial automorphism group\index{monomial automorphism group}} of $\C$. By definition, 
we clearly have 
$$
\PAut(\C) \subseteq \MAut(\C).
$$

The \textit{automorphism group}\index{automorphism group} of $\C$, denoted by $\GAut(\C)$, is the set
of maps of the form $M\gamma$,
where $M$ is a monomial matrix and $\gamma$ is a field automorphism, that map $\C$ to itself. In the binary
case, $\PAut(\C)$,  $\MAut(\C)$ and $\GAut(\C)$ are the same. If $q$ is a prime, $\MAut(\C)$ and
$\GAut(\C)$ are identical. According to their definitions, we have in general the following relations: 
$$
\PAut(\C) \subseteq \MAut(\C) \subseteq \GAut(\C).
$$

By definition, every element in $\GAut(\C)$ is of the form $DP\gamma$, where $D$ is a diagonal matrix,
$P$ is a permutation matrix, and $\gamma$ is an automorphism of $\gf(q)$.
The automorphism group $\GAut(\C)$ is said to be {\em $t$-transitive} if for every pair of $t$-element ordered
sets of coordinates, there is an element $DP\gamma$ of the automorphism group $\GAut(\C)$ such that its
permutation part $P$ sends the first set to the second set. The automorphism group $\GAut(\C)$ is said to be {\em $t$-homogeneous} if for every pair of $t$-element 
sets of coordinates, there is an element $DP\gamma$ of the automorphism group $\GAut(\C)$ such that its
permutation part $P$ sends the first set to the second set.

Using the automorphism group of a linear code $\C$, the following theorem gives another sufficient condition 
for the code $\C$ to hold $t$-designs.

\begin{theorem}~\cite[p. 308]{HP03}\label{thm-designCodeAutm}
Let $\C$ be a linear code of length $n$ over $\gf(q)$ such that $\GAut(\C)$ is $t$-transitive 
or $t$-homogeneous. Then the codewords of any weight $i \geq t$ of $\C$ hold a $t$-design.
\end{theorem}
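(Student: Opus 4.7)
The plan is to reduce the $t$-design property to the orbit structure of $\GAut(\C)$ on $t$-subsets of the coordinate set. Fix a weight $i \geq t$ with $A_i \neq 0$, and let $\cP = \{1, 2, \ldots, n\}$ be the coordinate set. Let $\cB_i(\C)$ be the set of supports of codewords of weight $i$ in $\C$. For any $t$-subset $T \subseteq \cP$, denote by $\lambda(T)$ the number of blocks in $\cB_i(\C)$ that contain $T$. To show that $(\cP, \cB_i(\C))$ is a $t$-design, I need to prove that $\lambda(T)$ does not depend on $T$.

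First I would record the relevant invariance properties of the group action. Any element $M = DP\gamma \in \GAut(\C)$ acts on $\gf(q)^n$ by coordinate permutation, scaling, and entrywise application of a field automorphism. Since $D$ is an invertible diagonal matrix and $\gamma$ is a field automorphism, neither changes whether a coordinate is zero or nonzero; only $P$ moves positions. Consequently, for every codeword $\bc \in \C$, the Hamming weight is preserved and $\support(M \bc) = P(\support(\bc))$. In particular $M$ permutes the set of weight-$i$ codewords of $\C$ and induces a permutation of $\cB_i(\C)$ that coincides with the action of the permutation part $P$ on subsets of $\cP$.

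Next I would invoke the hypothesis: given any two $t$-subsets $T_1, T_2 \subseteq \cP$, either $t$-transitivity or $t$-homogeneity of $\GAut(\C)$ produces some $M = DP\gamma \in \GAut(\C)$ whose permutation part satisfies $P(T_1) = T_2$. Then the map $B \mapsto P(B)$ is a bijection from $\{B \in \cB_i(\C) : T_1 \subseteq B\}$ onto $\{B \in \cB_i(\C) : T_2 \subseteq B\}$, since $B \in \cB_i(\C)$ iff $P(B) \in \cB_i(\C)$ by the preceding paragraph, and $T_1 \subseteq B$ iff $T_2 = P(T_1) \subseteq P(B)$. Therefore $\lambda(T_1) = \lambda(T_2)$, so $\lambda$ is constant on $t$-subsets, which is exactly the definition of a $t$-$(n, i, \lambda)$ design.

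The argument is essentially routine once the action of $\GAut(\C)$ on supports is correctly identified, and I do not expect a serious obstacle; the only minor subtlety is verifying that the diagonal part $D$ and the field automorphism $\gamma$ act trivially on supports, so that the permutation part $P$ alone controls the combinatorial action on $\cB_i(\C)$. Note also that $t$-transitivity is strictly stronger than $t$-homogeneity on $t$-subsets, so it suffices to handle the $t$-homogeneous case.
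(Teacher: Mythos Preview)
The paper does not supply its own proof of this theorem; it is quoted from \cite[p.~308]{HP03} and used as a black box. Your argument is correct and is precisely the standard textbook proof: elements $DP\gamma$ of $\GAut(\C)$ preserve Hamming weight and act on supports via the permutation part $P$ alone, so $t$-homogeneity (which is implied by $t$-transitivity) forces the number of weight-$i$ supports containing a given $t$-subset to be constant.
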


\subsection{The support $3$-designs of $\C_{(q+4)/4}^{(4)}$ and its dual}

Corollary \ref{cor-mdssubcode} gives that 
$$ 
\left(\C_{(q+4)/4}^{(4)}\right)^\perp = \left(\C_{(q+4)/4}^\perp\right)|_{\gf(4)}. 
$$ 
We are now ready to present families of $3$-designs supported by the codes $(\C_{(q+4)/4})^\perp|_{\gf(4)}$ and its dual 
$\C_{(q+4)/4}^{(4)}$. The second main result of this paper is the following. 

\begin{theorem}\label{thm-june11} 
Let notation be the same as before. For each $k$ with $1 \leq k <q$ and $A_k^\perp \neq 0$, the 
incidence structure $(\cP, \cB_k((\C_{(q+4)/4})^\perp|_{\gf(4)}))$ is a $3$-$(q+1, k, \lambda)$ 
design, where $\lambda$ is a positive integer and $A_k^\perp$ is the number of codewords with 
weight $k$ in $(\C_{(q+4)/4})^\perp|_{\gf(4)}$. 

For each $k$ with $1 \leq k <q$ and $A_k \neq 0$, the 
incidence structure $(\cP, \cB_k(\C_{(q+4)/4}^{(4)} )$ is a $3$-$(q+1, k, \lambda)$ 
design, where $\lambda$ is a positive integer and $A_k$ is the number of codewords with 
weight $k$ in $\C_{(q+4)/4}^{(4)}$.  
\end{theorem}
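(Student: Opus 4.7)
The plan is to apply Theorem \ref{thm-designCodeAutm} by exhibiting a 3-transitive group of permutation automorphisms on the $q+1$ coordinates of the code $(\C_{(q+4)/4})^\perp|_{\gf(4)}$, which will simultaneously act on the dual $\C_{(q+4)/4}^{(4)}$ by Corollary \ref{cor-mdssubcode}. The natural candidate is $\PGL(2,q)$ in its sharply 3-transitive action on the projective line $\PG(1,q)$, which has exactly $q+1$ points.

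First I would identify the coordinate set, naturally indexed by the $(q+1)$-th roots of unity $\{\beta^0, \beta^1, \ldots, \beta^q\}$ in $\gf(q^2)$, with $\PG(1,q)$, exploiting that distinct powers of $\beta$ lie in distinct cosets of $\gf(q)^*$ in $\gf(q^2)^*$. Under this identification the cyclic shift becomes a Singer cycle of order $q+1$ in $\PGL(2,q)$, and the reversal $i \mapsto -i$ (which preserves our code because the defining set is symmetric, as already exploited in Section \ref{sec-LCDMDScycliccode}) becomes an involution in $\PGL(2,q)$. Next I would invoke the classical fact that the MDS code $\C_{(q+4)/4}$ is equivalent to a doubly-extended generalized Reed--Solomon code whose evaluation set is the whole of $\PG(1,q)$, and that every such evaluation code is stabilized by coordinate permutations coming from $\PGL(2,q)$. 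Because $\PGL(2,q)$ acts by pure coordinate permutations and does not alter field entries, the same action preserves the dual $\C_{(q+4)/4}^\perp$, the subfield subcode $(\C_{(q+4)/4}^\perp)|_{\gf(4)}$, and by Corollary \ref{cor-mdssubcode} also $\C_{(q+4)/4}^{(4)}$.

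With $\PGL(2,q)$ sitting inside the permutation automorphism groups of both codes, and with its action on coordinates being 3-transitive, Theorem \ref{thm-designCodeAutm} then yields that for every weight $k \geq 3$ with a nonzero weight enumerator coefficient, the supports of weight-$k$ codewords form a $3$-$(q+1, k, \lambda)$ design with $\lambda$ a positive integer. Since both codes have minimum distance at least $3$ by Theorem \ref{thm-mainjune12}, weights $k \in \{1, 2\}$ never occur, and the 3-design conclusion holds for every $k$ with $1 \leq k < q$ in the stated range.

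The main obstacle is the second step: rigorously establishing that the full group $\PGL(2,q)$ (and not merely a proper subgroup) acts by coordinate permutations on $\C_{(q+4)/4}$. The cyclic shift, the reversal $i \mapsto -i$, and the Frobenius multiplier $i \mapsto 4i \bmod (q+1)$ together generate only a dihedral-type subgroup of order at most $2h(q+1)$, which in general is far smaller than $|\PGL(2,q)| = q(q^2-1)$ and is merely 2-homogeneous at best. One therefore either realizes $\C_{(q+4)/4}$ concretely as a generalized Reed--Solomon code whose evaluation vector ranges over $\PG(1,q)$ and appeals to the standard $\PGL(2,q)$-invariance of such evaluation codes, or verifies combinatorially that the defining set of $\C_{(q+4)/4}$ inside $\Z_{q+1}$ is stable under the permutations that $\PGL(2,q)$ induces on coordinate labels.
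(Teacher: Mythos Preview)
Your central claim---that $\PGL(2,q)$ acts on $\C_{(q+4)/4}$ by \emph{pure coordinate permutations}---is false, and the paper says so explicitly. Right after stating the theorem the authors remark (citing \cite{DTT2020}) that the permutation automorphism group of $(\C_{(q+4)/4}^\perp)|_{\gf(4)}$ is strictly smaller than $\PGL_2(\gf(q))$, and that Theorem~\ref{thm-designCodeAutm} therefore cannot be applied in the way you propose. The standard $\PGL(2,q)$-invariance of a doubly-extended Reed--Solomon code of dimension $k$ is \emph{monomial}: the M\"obius map $x\mapsto(ax+b)/(cx+d)$ sends the evaluation vector of $f$ to that of $(cx+d)^{k-1}f\!\bigl(\tfrac{ax+b}{cx+d}\bigr)$, and the diagonal factor $(cx+d)^{k-1}$ is not constant. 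So your sentence ``$\PGL(2,q)$ acts by pure coordinate permutations and does not alter field entries, [hence] the same action preserves \ldots\ the subfield subcode'' breaks at both points: the diagonal part is nontrivial, and since its entries lie in $\gf(q)^*$ rather than $\gf(4)^*$ a priori, the monomial map need not carry $\gf(4)$-vectors to $\gf(4)$-vectors. Neither of the two routes you outline in your last paragraph escapes this; the ``defining set stable under $\PGL(2,q)$'' picture does not make sense because $\PGL(2,q)$ does not act on $\Z_{q+1}$ in a way compatible with the cyclic-code structure.

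What the paper does instead is weaker than finding automorphisms but sufficient for designs: it shows that $\PGL(2,q)$, realised as the setwise stabiliser $\mathrm{Stab}_{U_{q+1}}\subset\PGL_2(\gf(q^2))$ of the $(q{+}1)$-st roots of unity, preserves the \emph{set of supports} $\cB(\C_{(q+4)/4}^{(4)})$. The substance is in Lemmas~\ref{lem:x(x+1)--q^2}--\ref{lem:mod-trans-f}: writing a codeword as the evaluation of $f(u)=\sum_{e\in E}a_e u^e$ with $E=T^c$ the defining set of $\C_{(q+4)/4}^{(4)}$, one proves the polynomial identity $(cu+1)^{w}\,f\!\bigl(\tfrac{u+c^q}{cu+1}\bigr)=\sum_{\ell\in E}b_\ell u^{\ell}$ for $w=2(q^2-1)/3$, by a digit-by-digit analysis of the base-$4$ expansions in $E$. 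This says precisely that the Type-III transformations carry supports to supports (the twist $(cu+1)^w$ is nonzero coordinatewise), without ever asserting that the $\gf(4)$-code is mapped to itself. Since $\mathrm{Stab}_{U_{q+1}}$ is $3$-transitive on $U_{q+1}$, the $3$-design property follows for each weight class. The missing idea in your proposal is exactly this support-level invariance together with the combinatorics of $E$ that makes it work.
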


\begin{example} 
Let $h=2$. Then the cyclic code $(\C_{(q+4)/4}^\perp)|_{\gf(4)}$ has parameters $[17, 4, 12]$ and weight enumerator 
$1+204z^{12}+51z^{16}$, and is distance-optimal. The cyclic code $((\C_{(q+4)/4}^\perp)|_{\gf(4)})^\perp$ has parameters $[17, 13, 4]$ 
and is distance-optimal.  Further, the minimum weight codewords in $(\C_{(q+4)/4}^\perp)|_{\gf(4)}$ support a $3$-$(17, 12, 22)$ design.  
\end{example} 

\begin{example}\label{exam-h=3} 
Let $h=3$. Then the cyclic code $(\C_{(q+4)/4}^\perp)|_{\gf(4)}$ has parameters $[65, 8, 44]$ and weight enumerator 
$1+18720z^{44}+16380z^{48}+30240z^{52}+195z^{64}$, and is distance-optimal. The cyclic code $((\C_{(q+4)/4}^\perp)|_{\gf(4)})^\perp$ has parameters 
$[65, 57, 5]$ and is distance-optimal.  
Further, the minimum weight codewords in $(\C_{(q+4)/4}^\perp)|_{\gf(4)}$ support a $3$-$(65, 44, 1892)$ design.  
\end{example} 

Example \ref{exam-h=3} shows that the Assmus-Mattson theorem cannot be used to prove the 
$3$-design property of the incidence structure $(\cP, \cB_k((\C_{(q+4)/4})^\perp|_{\gf(4)}))$. 
Since $q$ is even,  $\PGL_2(\gf(q))=\PSL_2(\gf(q))$. According to a theorem proved in \cite{DTT2020}, 
the permutation automorphism 
group of the code $(\C_{(q+4)/4})^\perp|_{\gf(4)}$ is smaller than $\PGL_2(\gf(q))$.  It looks hard to use Theorem \ref{thm-designCodeAutm} 
to prove the 
$3$-design property of the incidence structure $(\cP, \cB_k((\C_{(q+4)/4})^\perp|_{\gf(4)}))$ 
either.  In the next subsection, we will use a newly developed approach in \cite{DTT2020} to prove Theorem \ref{thm-june11}.   

\subsection{The proof of Theorem \ref{thm-june11}} 

\subsubsection{Group actions and $t$-designs}\label{sec-august11td} 

A \emph{permutation group} is a subgroup of the \emph{symmetric group} $\mathrm{Sym}(X)$, where $X$ is a finite set.
More generally, an \emph{action} $\sigma$ of a finite group $G$ on a set $X$ is a homomorphism $\sigma$ from $G$ to $\mathrm{Sym}(X)$.
We denote the image $\sigma(g)(x)$ of $x\in X$ under $g\in G$ by $g(x)$ when no confusion can arise.
The \emph{$G$-orbit} of $x\in X$ is $\mathrm{Orb}_{x}=\{g(x): g \in G\}$.
The \emph{stabilizer} of $x$ is $\mathrm{Stab}_x=\{g \in G: g(x)=x\}$. The length of the orbit of $x$ is given by
\[ \left | \mathrm{Orb}_{x} \right | = \left | G \right | / \left | \mathrm{Stab}_x \right |.\]

One criterion to measure the level of symmetry is the \emph{degree} of \emph{transitivity} and \emph{homogeneousity} of the group. 
Recall that a group $G$ acting on a set $X$ is called \emph{$t$-transitive} (resp., \emph{$t$-homogeneous}) if
for any two $t$-tuples $(x_1, \cdots, x_t), (x_1', \cdots, x_t')$
of pairwise distinct elements from $X$ (resp., two $t$-subsets $\{x_1, \cdots, x_t\}$, $\{x_1', \cdots, x_t'\}$ of $X$) there is some $g\in G$ such that $\left (x_1', \cdots, x_t' \right )= \left (g(x_1), \cdots, g(x_t) \right )$
(resp., $\left \{x_1', \cdots, x_t' \right \}=\left \{g(x_1), \cdots, g(x_t) \right \}$).

We recall a well-known general fact that (see \cite[Proposition 4.6]{BJL}), for a $t$-homogeneous group $G$ on a finite set $X$ with
$ |X|=\nu$  and a subset $B$ of $X$ with $|B| =k >t$, the pair $(X, \mathrm{Orb}_{B})$ is a $t$-$(\nu, k , \lambda)$ design,
where $\mathrm{Orb}_{B}$ is the set of images of $B$ under
the group $G$, $\lambda=\frac{\binom{k}{t} |G| }{\binom{\nu}{t} | \mathrm{Stab}_{B}|}$ and $\mathrm{Stab}_{B}$ is the setwise stabilizer of
$B$ in $X$. Let $\binom{X}{k}$ be the set of subsets of $X$ consisting of $k$ elements.
A nonempty subset $\mathcal B$ of $\binom{X}{k}$ is called \emph{invariant} under
 $G$ if $\mathrm{Orb}_{B} \subseteq \mathcal B$ for any $B \in \mathcal B$.
If this is the case, it means that the pair $(X, \mathcal B)$ is a $t$-$(\nu, k , \lambda)$ design
admitting $G$ as an automorphism group for some $\lambda$. 

\subsubsection{The projective general linear group $\PGL_2(\gf(q))$} 

The \emph{projective linear group $\PGL_2(\gf(q))$ of degree two} is defined as the group of invertible $2\times 2$ matrices with entries in $\gf(q)$,
 modulo the scalar matrices, $\begin{bmatrix}
a & 0\\
0 &  a
\end{bmatrix}$, where $a\in \gf(q)^*$.
Note that the group $\PGL_2(\gf(q))$ is generated by the matrices
$\begin{bmatrix}
a & 0\\
0 &  1
\end{bmatrix}$,
$\begin{bmatrix}
1 & b\\
0 &  1
\end{bmatrix}$
and
$\begin{bmatrix}
0 & 1\\
1 &  0
\end{bmatrix}$, where $a\in \gf(q)^*$ and $b\in \gf(q)$.

Here the following convention for the action of $\PGL_2(\gf(q))$ on the projective line $\mathrm{PG}(1,\gf(q))$
is used. A matrix $\begin{bmatrix}
a & b\\
c &  d
\end{bmatrix}
\in \PGL_2(\gf(q))$ acts on $\mathrm{PG}(1,\gf(q))$ by
\begin{eqnarray}\label{eq:action-PGL(2,q)}
\begin{array}{c}
(x_0 : x_1) \mapsto \begin{bmatrix}
a & b\\
c &  d
\end{bmatrix} (x_0 : x_1) =   (a x_0 +b x_1 : c x_0 +d x_1),
\end{array}
\end{eqnarray}
or, via the usual identification of $\gf(q) \cup \{ \infty\}$ with  $\mathrm{PG}(1,\gf(q))$, by linear fractional transformation
\begin{eqnarray}\label{eq:action-infty}
\begin{array}{c}
 x \mapsto \frac{a x +b  }{c x +d}.
\end{array}
\end{eqnarray}
This is an action on the left, i.e., for $\pi_1, \pi_2 \in  \PGL_2(\gf(q))$
and $x \in  \PG(1,\gf(q))$ the following holds: $\pi_1 (\pi_2(x)) = (\pi_1 \pi_2)(x)$.
The action of $\PGL_2(\gf(q))$ on $\PG(1,\gf(q))$ defined in (\ref{eq:action-infty}) is sharply $3$-transitive, i.e.,
for any distinct $a, b, c \in \gf(q) \cup \{\infty\}$ there is $\pi \in \PGL_2(\gf(q))$
taking $\infty$ to $a$, $0$ to $b$,  and $1$ to $c$.
 In fact, $\pi$ is uniquely determined and it equals
\[ \pi= \begin{bmatrix}
 a(b-c) & b(c-a)\\
 b-c & c-a
 \end{bmatrix}.
\]
Thus, $\PGL_2(\gf(q))$ is in one-to-one correspondence with the set of ordered triples $(a,b,c)$ of distinct elements in $\gf(q) \cup \{\infty\}$, and in particular
\begin{eqnarray}\label{eq:cardinality of PGL(2,q)}
\begin{array}{c}
| \PGL_2(\gf(q))| = (q+1)q(q-1).
\end{array}
\end{eqnarray}

Two subgroups $H_1$ and $H_2$ of a group $G$ are said to be \emph{conjugate} if there is a $g \in G$ such 
that $gH_1g^{-1}=H_2$. It is easily seen that this conjugate relation is an equivalence relation on the set of all subgroups 
of $G$, and is called the conjugacy.

\subsubsection{Another representation of the action of $\PGL_2(\gf(2^m))$ on the projective line $\PG(1,\gf(2^m))$}

To prove Theorem \ref{thm-june11}, 
we give another representation of the action of $\PGL_2(\gf(2^m))$ on the projective line $\PG(1,\gf(2^m))$.
Let $U_{q+1}$  be the subset of the projective line $\PG(1,\gf(q^2))=\gf(q^2) \cup \{\infty\}$ consisting of all the $(q+1)$-th roots of unity. Denote by $\mathrm{Stab}_{U_{q+1}}$ the setwise stabilizer of $U_{q+1}$ under the action of $\PGL_2(\gf(q^2))$ on $\PG(1,\gf(q^2))$.

\begin{proposition}\label{prop:three-types} \cite{DTT2020}
Let $q=2^m$. Then the setwise stabilizer $\mathrm{Stab}_{U_{q+1}}$ of $U_{q+1}$ consists of
the following three types of linear fractional transformations:
\begin{enumerate}
\item[(I)] $u \mapsto u_0 u$, where $u_0 \in U_{q+1}$;
\item[(II)] $u \mapsto u_0 u^{-1}$, where $u_0 \in U_{q+1}$;
\item[(III)] $u \mapsto \frac{u+c^q u_0}{c u +u_0}$, where $u_0 \in U_{q+1}$ and $c \in \gf(q^2)^* \setminus U_{q+1}$.
\end{enumerate}
\end{proposition}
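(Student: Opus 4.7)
The plan is to translate the stabilizer condition into a polynomial identity on $U_{q+1}$ and then solve the resulting algebraic system by a short case split. Given $\pi = \begin{bmatrix} a & b \\ c & d \end{bmatrix} \in \PGL_2(\gf(q^2))$, the setwise stabilizer condition $\pi(U_{q+1}) = U_{q+1}$ is equivalent (by finiteness of $U_{q+1}$ and bijectivity of $\pi$) to requiring $\left(\frac{au+b}{cu+d}\right)^{q+1} = 1$ for every $u \in U_{q+1}$. Substituting $u^q = u^{-1}$, valid on $U_{q+1}$, and clearing denominators turns this into the polynomial identity
\[
(a^q + b^q u)(au+b) = (c^q + d^q u)(cu+d),
\]
of degree at most $2$ in $u$, which must hold at the $q+1 \geq 3$ points of $U_{q+1}$ and therefore identically.

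Reading off the coefficients of $u^0$, $u^1$, $u^2$ in characteristic $2$ gives the system
\[
a^q b = c^q d, \qquad a^{q+1} + b^{q+1} = c^{q+1} + d^{q+1}, \qquad ab^q = cd^q,
\]
together with the nondegeneracy $ad + bc \neq 0$. Raising the first equation to the $q$-th power and using $x^{q^2} = x$ for $x \in \gf(q^2)$ shows it is equivalent to the third, so only two conditions are genuinely independent. I would now split cases on the matrix entries, up to scaling. If $c = 0$, the first equation forces $b = 0$ and the middle one forces $(a/d)^{q+1} = 1$, recovering type (I). If $c \neq 0$, normalize $c = 1$; when $a = 0$ the first equation gives $d = 0$ and the middle gives $b^{q+1} = 1$, yielding type (II). In the remaining subcase $c = 1$, $a \neq 0$, the first equation reads $d = a^q b$, and substituting into the middle equation and simplifying in characteristic $2$ produces the clean factorization $(a^{q+1}+1)(b^{q+1}+1) = 0$; the nondegeneracy $\det = b(a^{q+1}+1) \neq 0$ then forces $a^{q+1} \neq 1$ and $b^{q+1} = 1$. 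Setting $\gamma := 1/a$ and $u_0 := a^{q-1} b$ and rescaling, the matrix becomes $\begin{bmatrix} 1 & \gamma^q u_0 \\ \gamma & u_0 \end{bmatrix}$ with $u_0 \in U_{q+1}$ and $\gamma \in \gf(q^2)^* \setminus U_{q+1}$, matching type (III). Conversely, each of the three listed families preserves $U_{q+1}$ by a direct computation with $u^q = u^{-1}$ and $u_0^q = u_0^{-1}$.

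The main obstacle lies in the last subcase: carrying out the characteristic-$2$ manipulation carefully enough to see the factorization $(a^{q+1}+1)(b^{q+1}+1) = 0$ and then choosing the right rescaling to present the solution in the prescribed form of (III). As a global sanity check, the three families contribute $(q+1) + (q+1) + (q+1)(q^2 - q - 2) = (q+1)q(q-1) = |\PGL_2(\gf(q))|$ distinct elements of $\PGL_2(\gf(q^2))$, which matches the well-known isomorphism $\mathrm{Stab}_{U_{q+1}} \cong \PGL_2(\gf(q))$ and confirms that no cases are missed or double-counted.
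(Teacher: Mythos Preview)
The paper does not prove this proposition; it is quoted verbatim from \cite{DTT2020} and used as a black box, so there is no in-paper argument to compare against. Your proof is correct and self-contained: reducing $\pi(U_{q+1})\subseteq U_{q+1}$ to the degree-$2$ polynomial identity $(a^q+b^qu)(au+b)=(c^q+d^qu)(cu+d)$ via $u^q=u^{-1}$, extracting the three coefficient equations, and then doing the case split is exactly the right strategy. The characteristic-$2$ factorization $(a^{q+1}+1)(b^{q+1}+1)=0$ in the main subcase and the rescaling $\gamma=1/a$, $u_0=a^{q-1}b$ are carried out accurately, and your cardinality check $(q+1)+(q+1)+(q+1)(q^2-q-2)=(q+1)q(q-1)$ confirms the classification is exhaustive and the three families pairwise disjoint. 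One small point worth stating explicitly is that the denominator $cu+u_0$ in type~(III) never vanishes on $U_{q+1}$ (otherwise $c=u_0u^{-1}\in U_{q+1}$), so the converse direction is genuinely well-defined; you allude to this but it deserves a sentence.
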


The following result follows from Proposition \ref{prop:three-types} directly.

\begin{corollary}\label{cor:generators-three-types}  \cite{DTT2020} 
Let $q=2^m$. Then the setwise stabilizer $\mathrm{Stab}_{U_{q+1}}$ of $U_{q+1}$ is generated by
the following three types of linear fractional transformations:
\begin{enumerate}
\item[(I)] $u \mapsto u_0 u$, where $u_0 \in U_{q+1}$;
\item[(II)] $u \mapsto u^{-1}$;
\item[(III)] $u \mapsto \frac{u+c^q }{c u +1}$, where  $c \in \gf(q^2)^* \setminus U_{q+1}$.
\end{enumerate}
\end{corollary}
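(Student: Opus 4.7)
The plan is to invoke Proposition~\ref{prop:three-types}, which enumerates every element of $\mathrm{Stab}_{U_{q+1}}$ as one of three kinds of linear fractional transformation, and then to verify that each such map can be written as a composition of the simpler transformations listed in the corollary. Each of the three generator families in the corollary is already a specialization (obtained by setting $u_0 = 1$ in types (II) and (III)) of a transformation that by Proposition~\ref{prop:three-types} lies in $\mathrm{Stab}_{U_{q+1}}$, so the subgroup they generate is automatically contained in $\mathrm{Stab}_{U_{q+1}}$. The content is the reverse containment.

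Type (I) of the proposition requires no argument, as it coincides with type (I) of the corollary. For type (II), $u \mapsto u_0 u^{-1}$ with $u_0 \in U_{q+1}$, I would simply compose $\iota: u \mapsto u^{-1}$ (type (II) of the corollary) with the type (I) map $\mu_{u_0}: u \mapsto u_0 u$, obtaining $\mu_{u_0} \circ \iota$. For type (III), $u \mapsto \frac{u + c^q u_0}{cu + u_0}$, the plan is to precompose the type (III) generator $\psi_c: u \mapsto \frac{u + c^q}{cu + 1}$ with the type (I) generator $u \mapsto u_0^{-1} u$. The short computation
$$
\psi_c(u_0^{-1} u) \;=\; \frac{u_0^{-1} u + c^q}{c u_0^{-1} u + 1} \;=\; \frac{u + u_0 c^q}{cu + u_0},
$$
where the second equality comes from multiplying numerator and denominator by $u_0$, produces exactly the desired map; since $u_0^{-1} \in U_{q+1}$, the precomposed factor is a bona fide type (I) element of the corollary's list.

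No real obstacle arises. The only things to check along the way are that the parameter constraints ($u_0 \in U_{q+1}$ for types (I)-(II) and $c \in \gf(q^2)^* \setminus U_{q+1}$ for type (III)) are preserved under the compositions, which is immediate in each case. Combining the three verifications shows that the subgroup of $\PGL_2(\gf(q^2))$ generated by the three families in the corollary contains every element from Proposition~\ref{prop:three-types}'s description of $\mathrm{Stab}_{U_{q+1}}$, yielding the claimed equality.
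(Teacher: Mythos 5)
Your argument is correct and is exactly the route the paper intends: the paper gives no written proof, stating only that the corollary ``follows from Proposition~\ref{prop:three-types} directly,'' and your compositions $\mu_{u_0}\circ\iota$ for type (II) and $\psi_c\circ\mu_{u_0^{-1}}$ for type (III) are precisely the direct derivation being alluded to. The computations check out, including the parameter constraints.
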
 

The following proposition shows that the action of  $\mathrm{Stab}_{U_{q+1}}$ on $U_{q+1}$
and the action of $\PGL_2(\gf(2^m))$ on $\PG(1,\gf(2^m))$ are equivalent.

\begin{proposition}\label{prop:Stab-PGL}  \cite{DTT2020} 
Let $q=2^m$ and $\mathrm{Stab}_{U_{q+1}}$   the setwise stabilizer of $U_{q+1}$.
Then $\mathrm{Stab}_{U_{q+1}}$  is conjugate in $\PGL_2(\gf(2^{2m}))$ to the group $\PGL_2(\gf(2^m))$,
and its action on $U_{q+1}$ is equivalent to the action of $\PGL_2(\gf(2^m))$ on $\PG(1,\gf(2^m))$.
\end{proposition}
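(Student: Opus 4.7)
The plan is to produce a single Möbius transformation $\phi \in \PGL_2(\gf(q^2))$ that simultaneously witnesses conjugacy and equivalence of actions by restricting to a bijection $\phi: \PG(1, \gf(q)) \to U_{q+1}$. I would take the finite-field analog of the Cayley transform,
\[
\phi(x) = \frac{x - \alpha}{x - \alpha^q},
\]
for any $\alpha \in \gf(q^2) \setminus \gf(q)$, so that $\alpha^q \neq \alpha$ and $\phi$ is a well-defined element of $\PGL_2(\gf(q^2))$. For $x \in \gf(q) \cup \{\infty\}$, raising to the $q$-th power and using $x^q = x$ gives $\phi(x)^q = (x - \alpha^q)/(x - \alpha) = \phi(x)^{-1}$, so $\phi(x)^{q+1} = 1$, i.e., $\phi(x) \in U_{q+1}$. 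Since $\PG(1, \gf(q))$ and $U_{q+1}$ both have size $q+1$ and $\phi$ is injective on $\PG(1, \gf(q^2))$, this restricts to a bijection $\phi: \PG(1, \gf(q)) \to U_{q+1}$.

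Next I would identify the setwise stabilizer of $\PG(1, \gf(q))$ inside $\PGL_2(\gf(q^2))$ with $\PGL_2(\gf(q))$. The inclusion $\supseteq$ is obvious. For $\subseteq$, given any $\pi$ in the stabilizer, put $a = \pi(\infty)$, $b = \pi(0)$, $c = \pi(1)$, which are pairwise distinct elements of $\gf(q) \cup \{\infty\}$. By the sharp $3$-transitivity of $\PGL_2(\gf(q))$ on $\PG(1, \gf(q))$ there is a unique $\pi' \in \PGL_2(\gf(q))$ sending $(\infty, 0, 1)$ to $(a, b, c)$, and by the sharp $3$-transitivity of $\PGL_2(\gf(q^2))$ on $\PG(1, \gf(q^2))$ this $\pi'$ must coincide with $\pi$, whence $\pi \in \PGL_2(\gf(q))$. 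Conjugating by $\phi$ then yields
\[
\phi \, \PGL_2(\gf(q)) \, \phi^{-1} = \mathrm{Stab}_{U_{q+1}},
\]
which proves the conjugacy; moreover, $\phi: \PG(1, \gf(q)) \to U_{q+1}$ is by construction equivariant with respect to this conjugation, giving the equivalence of actions.

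The main obstacle is choosing the correct form of $\phi$. Once the Cayley-type formula $(x - \alpha)/(x - \alpha^q)$ is in hand, everything else is mechanical: the containment $\phi(\PG(1, \gf(q))) \subseteq U_{q+1}$ is a one-line Frobenius computation, and the identification of the stabilizer of $\PG(1, \gf(q))$ with $\PGL_2(\gf(q))$ reduces immediately to sharp $3$-transitivity. An alternative route would be to leverage the explicit description of $\mathrm{Stab}_{U_{q+1}}$ in Proposition \ref{prop:three-types} and exhibit an abstract isomorphism with $\PGL_2(\gf(q))$ by matching generators, but this is considerably more laborious and obscures why the two subgroups are actually conjugate inside $\PGL_2(\gf(q^2))$.
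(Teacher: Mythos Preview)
The paper does not give a proof of this proposition at all; it is quoted verbatim from \cite{DTT2020} and used as a black box. So there is no ``paper's own proof'' to compare against here.

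That said, your argument is correct and is the standard one. The Cayley-type transformation $\phi(x)=(x-\alpha)/(x-\alpha^{q})$ with $\alpha\in\gf(q^{2})\setminus\gf(q)$ is precisely the natural conjugating element: the Frobenius computation $\phi(x)^{q}=\phi(x)^{-1}$ for $x$ fixed by the $q$-power map (including $x=\infty$, where $\phi(\infty)=1$) gives the bijection $\PG(1,\gf(q))\to U_{q+1}$, and the identification of the setwise stabilizer of $\PG(1,\gf(q))$ inside $\PGL_2(\gf(q^{2}))$ with $\PGL_2(\gf(q))$ via sharp $3$-transitivity is clean and complete. Conjugation by $\phi$ then carries one stabilizer to the other and is automatically equivariant, so both conclusions follow. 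Your remark that the alternative route through Proposition~\ref{prop:three-types} (matching generators) would be more laborious is also accurate; the single conjugating M\"obius map is the right tool.
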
 

\subsubsection{The proof of Theorem \ref{thm-june11}} 

Let $q=4^h$ and $n=q+1$. Define 
\begin{eqnarray}\label{eqn-August12201}
E=\left \{ 1+\sum_{i=0}^{h-1} e_i 4^i : e_i = 1 \text{ or } 2 \right \} = \{n-i: i \in T^c\}=T^c,
\end{eqnarray} 
where 
$$
T^c= 
\left\{\sum_{i=0}^{h-1}a_i4^i : a_0 \in \{2, 3\}, \ a_i \in \{1,2\} \mbox{ for } 1 \leq i \leq h-1 \right\}. 
$$
By the proof of Theorem \ref{thm-mainjune12}, $E$ is the defining set of 
$$
\C_{(q+4)/4}^{(4)} = \left( \left(\C_{(q+4)/4}^\perp  \right)|_{\gf(4)} \right)^\perp.  
$$ 
We index the coordinates in the codewords in both codes with the elements in $U_{q+1}$. When 
an element $\sigma \in \mathrm{Stab}_{U_{q+1}}$ acts on a codeword in $\C_{(q+4)/4^{(4)}}$, it acts 
on the coordinates of the codeword, and thus on the set of supports of all nonzero codewords in 
 $\C_{(q+4)/4}^{(4)}$. 

Let $\cB(\C_{(q+4)/4}^{(4)})$ denote the set of all supports of all nonzero codewords in $\C_{(q+4)/4}^{(4)}$. 
Recall that $\C_{(q+4)/4}^{(4)}$ is a cyclic code. It follows that $\cB(\C_{(q+4)/4}^{(4)})$ 
is fixed by any element $\sigma(u)=u_0u$ of  $\mathrm{Stab}_{U_{q+1}}$, where $u_0 \in U_{q+1}$.  

By (\ref{eqn-August12201}) we deduce that $i \in E$ if and only if $n-i \in E$. It then follows that $\cB(\C_{(q+4)/4}^{(4)})$ 
is fixed by the element $\sigma(u)=u^{-1}$ of  $\mathrm{Stab}_{U_{q+1}}$.  

We finally prove that the following elements of $\mathrm{Stab}_{U_{q+1}}$ 
\begin{eqnarray}\label{eqn-august11group}
u \mapsto \frac{u+c^q}{cu+1}, \ \ c \in \gf(q^2)^* \setminus U_{q+1} 
\end{eqnarray} 
fix $\cB(\C_{(q+4)/4}^{(4)})$. To this end, we need to prove a few lemmas below. 

\begin{lemma}\label{lem:x(x+1)--q^2}
Let $w=\sum_{i=0}^{2h-1} 2 \cdot 4^i$, $e=1+\sum_{i=0}^{h-1} e_i 4^i$ and $f=x^{q^2-1-qe}(x+1)^{w+(q-1)e-q^2+1} \in \gf(2)[x]$, where $e_i \in \{1,2\}$.
Then $$f= \sum \limits_{\begin{array}{c}  0\le v_{i} \le 2- e_i \\ 0\le v_{h+i} \le e_i-1 \end{array}} x^{ \sum_{i=0}^{h-1}  v_i  4^i + \sum_{i=0}^{h-1} (3-e_i+ v_{i+h})4^{h+i} -1}.$$
\end{lemma}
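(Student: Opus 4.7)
The plan is to exploit the characteristic-$2$ Frobenius identity $(x+1)^{2^k}=x^{2^k}+1$ to recast both sides of the claimed equality as a common factored polynomial, and then identify the two sides by a simple combinatorial bijection.

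First I would re-index the sum on the right-hand side. Since each $e_i\in\{1,2\}$, the pair of ranges $0\le v_i\le 2-e_i$ and $0\le v_{h+i}\le e_i-1$ collapses to a single binary choice: when $e_i=1$ we have $v_i\in\{0,1\}$ and $v_{h+i}=0$, and when $e_i=2$ we have $v_i=0$ and $v_{h+i}\in\{0,1\}$. Setting $d_i=4^i$ if $e_i=1$ and $d_i=4^{h+i}$ if $e_i=2$, the $2^h$ valid tuples $v$ are in bijection with vectors $\epsilon\in\{0,1\}^h$, and a direct substitution of $qe=4^h+\sum_{i=0}^{h-1}e_i4^{h+i}$ into $N(v)$ should yield $N(v)-(q^2-1-qe)=\sum_{i=0}^{h-1}\epsilon_i d_i$. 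Consequently the right-hand side of the lemma equals $x^{q^2-1-qe}\prod_{i=0}^{h-1}(1+x^{d_i})$.

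Next I would observe that $d_0,\ldots,d_{h-1}$ are pairwise distinct powers of $2$: writing $d_i=2^{a_i}$, the exponents lie in $\{0,2,\ldots,2h-2\}$ when $e_i=1$ and in $\{2h,2h+2,\ldots,4h-2\}$ when $e_i=2$, two disjoint sets of distinct values. Applying the Frobenius identity one factor at a time therefore gives $\prod_{i=0}^{h-1}(1+x^{d_i})=(x+1)^{\sum_{i=0}^{h-1}d_i}$ in $\gf(2)[x]$. It then remains to verify the arithmetic identity $\sum_{i=0}^{h-1}d_i=\sum_{i=0}^{h-1}(2-e_i)4^i+\sum_{i=0}^{h-1}(e_i-1)4^{h+i}=w+(q-1)e-q^2+1$, which reduces to $w=2(q^2-1)/3$ together with $(q-1)e=(q-1)+\sum e_i4^{h+i}-\sum e_i4^i$. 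Matching this against $f=x^{q^2-1-qe}(x+1)^{w+(q-1)e-q^2+1}$ completes the argument.

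The main obstacle is purely bookkeeping: the collapse $N(v)-(q^2-1-qe)=\sum \epsilon_i d_i$ and the identity $\sum d_i=w+(q-1)e-q^2+1$ each require careful base-$4$ manipulation, though neither carries any conceptual difficulty. Once both are verified, the Frobenius identity closes the lemma in a single line.
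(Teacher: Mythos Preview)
Your proposal is correct and is essentially the same argument as the paper's, only traversed in the opposite direction: the paper starts from $f$, rewrites the two exponents in base $4$ as $q^2-1-qe=\sum_{i=0}^{h-1}(3-e_i)4^{h+i}-1$ and $w+(q-1)e-q^2+1=\sum_{i=0}^{h-1}(2-e_i)4^i+\sum_{i=0}^{h-1}(e_i-1)4^{h+i}$, applies Frobenius to expand $(x+1)^{\sum e_i'4^i}=\prod(x^{4^i}+1)^{e_i'}$, and reads off the sum, whereas you collapse the sum on the right into $x^{q^2-1-qe}\prod(1+x^{d_i})$ and then use Frobenius in reverse. Your packaging via the single binary choice $\epsilon_i$ and the auxiliary $d_i$ is a clean way to organize exactly the same base-$4$ bookkeeping.
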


\begin{proof}
It is easy to check that
\begin{eqnarray}\label{eq:x^(q^2-1)}
\begin{array}{rl}
q^2-1-qe &= \sum_{i=0}^{2h-1} 3 \cdot 4^i -\sum_{i=0}^{h-1} e_i 4^{h+i}-4^h\\
&=\sum_{i=0}^{h-1} 3 \cdot 4^i +\sum_{i=0}^{h-1} (3-e_i) 4^{h+i} -4^h\\
&=\sum_{i=0}^{h-1} (3-e_i) 4^{h+i}-1.
\end{array}
\end{eqnarray}

An easy computation yields that
\begin{eqnarray}\label{eq:q^2-1}
\begin{array}{l}
w+(q-1)e-q^2+1\\
=w+(q-1) \left (1+\sum_{i=0}^{h-1} e_i 4^i \right )-q^2+1\\
=(q-1)\sum_{i=0}^{h-1} e_i 4^i+w-(q^2-1)+(q-1)\\
=-\sum_{i=0}^{h-1} e_i 4^i+ \sum_{i=0}^{h-1} e_i 4^{h+i}+\sum_{i=0}^{h-1}3 \cdot 4^{i} - \sum_{i=0}^{2h-1} 4^i\\
=\sum_{i=0}^{h-1} (2-e_i) 4^i+ \sum_{i=0}^{h-1} (e_i-1) 4^{h+i}\\
=\sum_{i=0}^{2h-1} e_i' 4^i,
\end{array}
\end{eqnarray}
where $e_i'=2-e_i$ for $0 \le i \le h-1$ and $e_i'=e_{i-h}-1$ for $h \le i \le 2h-1$.
In particular, $e_i'\in \left \{ 0,1 \right \}$.
A straightforward computation from (\ref{eq:q^2-1})  shows  that
\begin{eqnarray}\label{eq:v'i}
\begin{array}{rl}
f&=x^{q^2-1-qe} (x+1)^{\sum_{i=0}^{h-1} (2-e_i) 4^i+ \sum_{i=0}^{h-1} (e_i-1) 4^{h+i}}\\
&=x^{q^2-1-qe} \prod_{i=0}^{2h-1}(x^{4^i}+1)^{e_i'}\\
&= x^{q^2-1-qe} \sum \limits_{\begin{array}{c} v_1, \cdots, v_{2h-1}\in \{0,1\}\end{array}}\left (  \prod_{i=0}^{2h-1} \binom{e_i'}{v_i} \right ) x^{\sum_{i=0}^{2h-1} v_i 4^i}\\
&= x^{q^2-1-qe} \sum \limits_{\begin{array}{c}  0\le v_{i}\le e_i'\end{array}} x^{\sum_{i=0}^{2h-1} v_i 4^i}.
\end{array}
\end{eqnarray}
Combining (\ref{eq:v'i}) with (\ref{eq:x^(q^2-1)}) gives that
\begin{eqnarray}
\begin{array}{rl}
f&= \sum \limits_{\begin{array}{c}  0\le v_{i}\le e_i'\\ 0\le v_{h+i}\le e_{h+i}'\end{array}} x^{ \sum_{i=0}^{h-1}  v_i 4^i + \sum_{i=0}^{h-1} (3-e_i+ v_{i+h})4^{h+i} -1}.
\end{array}
\end{eqnarray}
This completes the proof.
\end{proof}

\begin{lemma}\label{lem:x(x+1)--q}
Let $e_i \in \{1,2\}$ for $0 \le i \le h-1$. Let $\ell = \sum_{i=0}^{h-1}  v_i 4^i + \sum_{i=0}^{h-1} (3-e_i+ v_{i+h})4^{h+i} -1 $ where
$0\le v_i\le 2-e_i$ and $0\le v_{i+h} \le e_i-1$. Then the remainder of $\ell$  divided by $q+1$ belongs to the set $E$. 
\end{lemma}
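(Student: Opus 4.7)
The plan is to reduce $\ell$ modulo $n = 4^h+1$ using the identity $4^h \equiv -1 \pmod{n}$, which immediately gives $4^{h+i} \equiv -4^i \pmod{n}$ for $0 \le i \le h-1$. Substituting into the formula for $\ell$,
\begin{eqnarray*}
\ell + 1 \;\equiv\; \sum_{i=0}^{h-1} v_i 4^i - \sum_{i=0}^{h-1} (3 - e_i + v_{i+h}) 4^i \;=\; \sum_{i=0}^{h-1} c_i 4^i \pmod{n},
\end{eqnarray*}
where $c_i := v_i + e_i - 3 - v_{i+h}$. The first substantive step is a two-line case analysis on $e_i \in \{1,2\}$: if $e_i = 1$ then $v_{i+h}=0$ and $v_i \in \{0,1\}$, so $c_i = v_i - 2 \in \{-2,-1\}$; if $e_i = 2$ then $v_i=0$ and $v_{i+h} \in \{0,1\}$, so $c_i = -1 - v_{i+h} \in \{-2,-1\}$. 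Hence in both cases $d_i := -c_i \in \{1,2\}$.

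Setting $r := 1 + \sum_{i=0}^{h-1} d_i 4^i$, the preceding congruence becomes $\ell \equiv -r \pmod{n}$. By the very definition (\ref{eqn-August12201}) of $E$, the integer $r$ belongs to $E$. Moreover, since $d_i \in \{1,2\}$, the bounds $\tfrac{4^h+2}{3} \le r \le \tfrac{2\cdot 4^h+1}{3}$ hold, so $0 < r < n$. Therefore the remainder of $\ell$ modulo $n$ equals $n - r$.

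It remains to observe that $n - r \in E$. This is exactly the symmetry $E = \{n - i : i \in T^c\} = T^c$ recorded right after (\ref{eqn-August12201}); concretely, one checks it directly from the explicit base-$4$ form of elements of $E$, or invokes the fact that $n - r \equiv -r \equiv 4r \cdot 4^{h-1} \pmod{n}$ lies in the same $4$-cyclotomic coset union as $r$ (Lemma~\ref{lem-may274}). Either way, the remainder $n-r$ lies in $E$, completing the proof.

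The only place that requires any real bookkeeping is the case split producing $c_i \in \{-2,-1\}$; once that is in hand, the conclusion is forced by the modular arithmetic $4^h \equiv -1 \pmod{n}$ and the $i \mapsto n-i$ symmetry of $E$ that the paper has already established. I do not foresee a serious obstacle beyond keeping the sign conventions and the range constraints straight.
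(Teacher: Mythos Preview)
Your proof is correct and follows essentially the same route as the paper: reduce $\ell$ modulo $n=4^h+1$ via $4^h\equiv -1$, then do the case split on $e_i\in\{1,2\}$ to pin down the digits. The only difference is that the paper simplifies one step further---pulling out $-3\sum_{i=0}^{h-1}4^i=-(4^h-1)$---to obtain directly $\ell\equiv 1+\sum_{i=0}^{h-1}(e_i+v_i-v_{i+h})4^i$, which already lies in $E$ and in $[0,n-1]$, so your detour through $\ell\equiv -r$ and the $i\mapsto n-i$ symmetry of $E$ is not needed.
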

\begin{proof}
A standard computation shows that 
\begin{eqnarray}\label{eq:ell--q+1}
\begin{array}{rl}
\ell & \equiv \sum_{i=0}^{h-1}  v_i 4^i - \sum_{i=0}^{h-1} (3-e_i+ v_{i+h})4^{i} -1 \pmod{q+1}\\
& \equiv \sum_{i=0}^{h-1} (e_i+ v_i-v_{i+h})4^{i} -4^h \pmod{q+1}\\
& \equiv 1+ \sum_{i=0}^{h-1} (e_i+ v_i-v_{i+h})4^{i} \pmod{q+1}.
\end{array}
\end{eqnarray}
It is a simple matter to check that 
\begin{eqnarray}\label{i--i+h}
(v_i, v_{h+i})=\left \{
\begin{array}{cl}
(0,0) \text{ or } (1,0), & \text{ if } e_i=1,\\
(0,0) \text{ or } (0,1), & \text{ if } e_i=2.
\end{array}
\right .
\end{eqnarray}
Plugging (\ref{i--i+h}) into (\ref{eq:ell--q+1}) yields the desired conclusion. 

\end{proof}

\begin{lemma}\label{lem:u+c^q--cu+1}
Let $w=\sum_{i=0}^{2h-1} 2 \cdot 4^i$ and $e \in E$. Let $c\in \gf(q^2)^* \setminus U_{q+1}$. Then
there exists an array $\left ( a_{\ell} \right )_{\ell \in E} \in \gf(q^2)^{2^h}$ such that
\begin{eqnarray*}
\begin{array}{c}
\left ( u+c^q \right )^e \left ( cu+1 \right )^{w-e}= \sum_{\ell \in E} a_{\ell} u^{\ell}, \text{ for any } u \in U_{q+1}.
\end{array}
\end{eqnarray*}
\end{lemma}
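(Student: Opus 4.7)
The plan is to collapse $(u+c^q)^e(cu+1)^{w-e}$ into a single binomial power on $U_{q+1}$ by means of a Frobenius identity, reduce the resulting exponent modulo $q^2-1$, and then reuse the base-$4$ digit bookkeeping already performed in Lemmas \ref{lem:x(x+1)--q^2} and \ref{lem:x(x+1)--q}. The key observation is that on $U_{q+1}$ one has $u^q=u^{-1}$, and since $q$ is a power of two,
\begin{eqnarray*}
u(cu+1)^q = u\bigl(c^q u^q+1\bigr) = u\bigl(c^q u^{-1}+1\bigr) = u+c^q,
\end{eqnarray*}
so, on $U_{q+1}$,
\begin{eqnarray*}
(u+c^q)^e (cu+1)^{w-e} = u^e (cu+1)^{(q-1)e+w} = u^e (cu+1)^{M}, \qquad M:=(q-1)e+w.
\end{eqnarray*}

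Next, because $c\notin U_{q+1}$, $cu+1$ is nonzero on $U_{q+1}$, hence $(cu+1)^{q^2-1}=1$. Using the bounds $(q+2)/3\le e\le (2q+1)/3$ coming from $e\in E$, a short calculation gives $q^2-1<M<2(q^2-1)$, so $M\equiv M'\pmod{q^2-1}$ with $M'=w+(q-1)e-q^2+1$. By the computation of Lemma \ref{lem:x(x+1)--q^2}, $M'=\sum_{i=0}^{2h-1} e_i'\, 4^i$ with $e_i'\in\{0,1\}$, namely $e_i'=2-e_i$ for $0\le i\le h-1$ and $e_i'=e_{i-h}-1$ for $h\le i\le 2h-1$. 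In characteristic two this yields the factorization
\begin{eqnarray*}
(cu+1)^{M'} = \prod_{i\,:\,e_i'=1} \bigl(c^{4^i} u^{4^i}+1\bigr) = \sum_{S\subseteq\{i\,:\,e_i'=1\}} c^{k_S}\, u^{k_S}, \qquad k_S:=\sum_{i\in S}4^i.
\end{eqnarray*}

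Finally, it remains to check that every $e+k_S$, reduced modulo $q+1$, lies in $E$. Writing $k_S=\sum_{i=0}^{h-1} v_i 4^i+\sum_{i=0}^{h-1} v_{h+i} 4^{h+i}$ with $v_i\in\{0,2-e_i\}$ and $v_{h+i}\in\{0,e_i-1\}$, and using $4^{h+i}\equiv -4^i\pmod{q+1}$, we obtain
\begin{eqnarray*}
e+k_S \equiv 1+\sum_{i=0}^{h-1} (e_i+v_i-v_{h+i})\, 4^i \pmod{q+1}.
\end{eqnarray*}
A case analysis on $e_i\in\{1,2\}$ (exactly as in Lemma \ref{lem:x(x+1)--q}) shows that each digit $e_i+v_i-v_{h+i}$ lies in $\{1,2\}$, so the reduced exponent belongs to $E$. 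Grouping the terms of $u^e(cu+1)^{M'}$ according to this reduced exponent $\ell$ then yields the required identity with $a_\ell=\sum c^{k_S}\in\gf(q^2)$, the sum being over those $S$ satisfying $e+k_S\equiv\ell\pmod{q+1}$.

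The only real technical subtlety is verifying that $M$ lies in the single window $[q^2-1,\,2(q^2-1))$ uniformly for all $e\in E$, so that exactly one subtraction of $q^2-1$ lands at $M'$; once this is in place, the rest of the argument is essentially a transcription of the digit-level arithmetic already carried out in Lemmas \ref{lem:x(x+1)--q^2} and \ref{lem:x(x+1)--q}.
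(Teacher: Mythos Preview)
Your argument is correct and follows essentially the same route as the paper: both rewrite $u+c^q=u(cu+1)^q$ on $U_{q+1}$, reduce the resulting exponent on $cu+1$ by $q^2-1$ to reach $M'=w+(q-1)e-q^2+1$, and then use the base-$4$ digit analysis of Lemmas~\ref{lem:x(x+1)--q^2} and~\ref{lem:x(x+1)--q} to show every resulting exponent reduces into $E$ modulo $q+1$. The only cosmetic difference is that the paper substitutes $x=cu$ and applies Lemma~\ref{lem:x(x+1)--q^2} as a black box (carrying the extra factor $c^{qe}(cu)^{q^2-1-qe}$), whereas you keep the factor $u^e$ separate and expand only $(cu+1)^{M'}$; the final reduction modulo $q+1$ is identical.
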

\begin{proof}
It is clear that $u+c^q=(u^{-1}+c)^q=u^{-q}(cu+1)^q$.
An immediate verification gives that
\begin{eqnarray}\label{eq:u--cu}
\begin{array}{l}
\left ( u+c^q \right )^e \left ( cu+1 \right )^{w-e}\\
=u^{-qe}(cu+1)^{w+(q-1)e}\\
=u^{q^2-1-qe}(cu+1)^{w+(q-1)e-q^2+1}\\
=c^{qe}(cu)^{q^2-1-qe}(cu+1)^{w+(q-1)e-q^2+1}.
\end{array}
\end{eqnarray}
Applying Lemmas \ref{lem:x(x+1)--q^2} and  \ref{lem:x(x+1)--q} to  (\ref{eq:u--cu}) yields the desired conclusion. 
\end{proof}

The following very useful result is derived immediately from Lemma \ref{lem:u+c^q--cu+1}.

\begin{lemma}\label{lem:mod-trans-f}
Let $w=\sum_{i=0}^{2h-1} 2 \cdot 4^i$. Let $c\in \gf(q^2)^* \setminus U_{q+1}$. 
and $f(u)$ a function from $U_{q+1}$ to $\gf(q^2)$ given by $u \mapsto \sum_{e\in E} a_e u^e$.
Then there exists an array $\left ( b_{\ell} \right )_{\ell \in E} \in \gf(q^2)^{2^h}$ such that
\[(cu+1)^{w} f\left (  \frac{u+c^q}{cu+1}\right )= \sum_{\ell \in E} b_{\ell}  u^{\ell}.\]

\end{lemma}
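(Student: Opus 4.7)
The plan is to derive this lemma as an immediate linear consequence of Lemma~\ref{lem:u+c^q--cu+1}. The idea is that $f$ is a $\gf(q^2)$-linear combination of monomials $u^e$ with $e \in E$, and Lemma~\ref{lem:u+c^q--cu+1} handles exactly the single-monomial case, so the general case will drop out by summing.

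First I would substitute the expression $f(u)=\sum_{e\in E}a_e u^e$ into the left-hand side. Since $\left(\frac{u+c^q}{cu+1}\right)^e \cdot (cu+1)^w = (u+c^q)^e(cu+1)^{w-e}$ whenever $w-e \geq 0$, I obtain
\begin{align*}
(cu+1)^w f\!\left(\frac{u+c^q}{cu+1}\right) = \sum_{e \in E} a_e\, (u+c^q)^e (cu+1)^{w-e}.
\end{align*}
Before proceeding I would check that $w-e$ is a nonnegative integer for every $e \in E$, so that each summand is a genuine polynomial function of $u$. This follows from the estimates in the proof of Theorem~\ref{thm-mainjune12}: the largest element of $E$ is $(2q+1)/3$, whereas $w=2(q^2-1)/3$, so $w-e>0$ for all $e \in E$ when $q \geq 2$.

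Next, for each individual $e \in E$ I would apply Lemma~\ref{lem:u+c^q--cu+1} to obtain a tuple $(a_{e,\ell})_{\ell \in E}\in \gf(q^2)^{2^h}$ with
\begin{align*}
(u+c^q)^e (cu+1)^{w-e} = \sum_{\ell \in E} a_{e,\ell}\, u^{\ell} \quad \text{for every } u \in U_{q+1}.
\end{align*}
Substituting this back and interchanging the two finite sums yields
\begin{align*}
(cu+1)^w f\!\left(\frac{u+c^q}{cu+1}\right) = \sum_{\ell \in E} \left(\sum_{e \in E} a_e a_{e,\ell}\right) u^{\ell},
\end{align*}
so the lemma holds with $b_\ell := \sum_{e\in E} a_e a_{e,\ell}$.

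I do not expect any real obstacle: all of the genuine combinatorial work (controlling how the exponents of $(u+c^q)^e(cu+1)^{w-e}$ reduce modulo $q+1$ and fall back inside $E$) has already been performed in Lemmas~\ref{lem:x(x+1)--q^2}, \ref{lem:x(x+1)--q}, and \ref{lem:u+c^q--cu+1}. The only technical point worth stating explicitly is the verification that $w-e\geq 0$, which is why the formula is written with $(cu+1)^{w-e}$ rather than as a rational function.
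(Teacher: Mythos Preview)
Your proposal is correct and matches the paper's approach exactly: the paper itself offers no explicit proof, stating only that the lemma ``is derived immediately from Lemma~\ref{lem:u+c^q--cu+1},'' and your argument spells out precisely that derivation by linearity. The check that $w-e>0$ and the implicit observation that $cu+1\neq 0$ on $U_{q+1}$ (since $c\notin U_{q+1}$) are the only details worth noting, and you have handled them.
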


It then follows from Lemma \ref{lem:mod-trans-f} that permutation described in (\ref{eqn-august11group}) fixes 
$\cB(\C_{(q+4)/4}^{(4)})$. Combining the discussions above and Corollary \ref{cor:generators-three-types}, 
we deduce that  $\cB(\C_{(q+4)/4}^{(4)})$ is fixed by $\mathrm{Stab}_{U_{q+1}}$. Hence, the same is true 
for  $\cB((\C_{(q+4)/4}^{(4)})^\perp)$.  Recall that $\mathrm{Stab}_{U_{q+1}}$ acts triply on $U_{q+1}$. The 
$3$-design property of the designs held in both codes follows from the discussions in Section \ref{sec-august11td}. 
This completes the proof of  Theorem~\ref{thm-june11}.

\section{Concluding remarks} 

The contributions of this paper are the family of quaternary cyclic codes $(\C_{(q+4)/4}^\perp)|_{\gf(4)}$ 
and their duals documented in Theorem \ref{thm-mainjune12} and the two infinite families of $3$-designs 
supported by these codes in Theorem~\ref{thm-june11}. 
The codes $(\C_{(q+4)/4}^\perp)|_{\gf(4)}$ would be very interesting due to the following: 
\begin{itemize} 
\item They are reversible BCH codes obtained from a family of MDS codes. 
\item They are distance-optimal when $m \in \{2,4,6\}$ and very good in general.  
\item They may be the first family of quaternary codes supporting an infinite family of $3$-designs. 
\end{itemize}

Naturally, we have a family of quaternary cyclic codes $(\C_{u}^\perp)|_{\gf(4)}$ for each $u$ with $2 \leq u \leq q/2$. 
Among all the $q/2$ families of quaternary cyclic codes $(\C_{u}^\perp)|_{\gf(4)}$, the family of codes $(\C_{(q+4)/4}^\perp)|_{\gf(4)}$ is very special in the sense that they have the best parameters and support $3$-designs, according to our experimental data. 

The results of this paper demonstrate that the subfield codes of some MDS  codes could be very interesting. It would be worthy to studying the subfield codes of Reed-Solomon codes.


\end{document}